\documentclass[sigconf,screen,balance=false]{acmart}

\newif\ifanonymous
\anonymousfalse
\relax

\AtBeginDocument{}

\usepackage[utf8]{inputenc} 
\usepackage[T1]{fontenc}    
\usepackage{xcolor}         
\definecolor{linkdarkblue}{rgb}{0, 0.08, 0.45}    
\usepackage{url}            
\usepackage{array}          
\usepackage{colortbl}       
\usepackage{makecell}       
\usepackage{tabularx}       
\usepackage{bigstrut}       
\usepackage{multirow}       
\usepackage{colortbl}       
\usepackage{nicefrac}       
\usepackage{microtype}      
\usepackage{graphicx}       
\usepackage{bm}             
\usepackage[capitalise,noabbrev]{cleveref} 
\usepackage{subcaption}     
\usepackage{booktabs}       
\usepackage{wrapfig}        
\usepackage{pifont}         
\usepackage{marvosym}       
\usepackage{algorithm}      
\usepackage{mathtools}      
\usepackage{enumitem}       
\usepackage{algorithm}      
\usepackage{algorithmicx}   
\usepackage{algpseudocode}  
\usepackage{thm-restate}    
\usepackage{fontawesome}    
\usepackage{comment}        
\usepackage[normalem]{ulem} 
\usepackage[most]{tcolorbox}
\usepackage{xspace}         
\usepackage{amsmath,amsfonts,bm}

\def\eqref#1{equation~\ref{#1}}

\def\1{\bm{1}}

\DeclareMathAlphabet{\mathsfit}{\encodingdefault}{\sfdefault}{m}{sl}
\SetMathAlphabet{\mathsfit}{bold}{\encodingdefault}{\sfdefault}{bx}{n}

\allowdisplaybreaks

\newcommand{\noindentparnolinenospace}[1]{\noindent\textbf{#1} }

\newcommand{\noindentparnoline}[1]{\addvspace{0.50\baselineskip}\noindent\textbf{#1} }

\newcommand{\eqnref}[1]{(\ref{#1})}

\numberwithin{equation}{section}    
\numberwithin{algorithm}{section}   

\theoremstyle{plain}

\newtheorem*{theorem*}{Theorem}

\newtheorem*{lemma*}{Lemma}

\newtheorem*{axiom*}{Axiom}

\newtheorem*{conjecture*}{Conjecture}

\newtheorem*{assumption*}{Assumption}

\newtheorem*{proposition*}{Proposition}

\newtheorem*{corollary*}{Corollary}

\theoremstyle{definition}

\newtheorem*{definition*}{Definition}

\newtheorem*{example*}{Example}

\newtheorem*{problem*}{Problem}

\newtheorem*{question*}{Question}

\newtheorem*{exercise*}{Exercise}

\newtheorem*{remark*}{Remark}

\definecolor{darkgreen}{rgb}{0.0, 0.5, 0.0}
\definecolor{skyblue}{rgb}{0.53, 0.81, 0.98}
\definecolor{beamsearchblue}{rgb}{0.42, 0.56, 0.75}     
\definecolor{beamsearchyellow}{rgb}{0.84, 0.71, 0.34}   
\definecolor{beamsearchred}{rgb}{0.72, 0.33, 0.31}      
\definecolor{darkred}{rgb}{0.6, 0, 0}
\definecolor{pruning_red}{rgb}{0.64, 0.12, 0.21}        
\definecolor{pruning_gray}{rgb}{0.6, 0.6, 0.6}          
\definecolor{sensitivity_blue}{rgb}{0.30, 0.45, 0.69}   
\definecolor{sensitivity_orange}{rgb}{0.87, 0.52, 0.32} 

\newcolumntype{Y}{>{\centering\arraybackslash}X}

\newcommand{\ie}{\text{i.e., }}         
\newcommand{\eg}{\text{e.g., }}         
\newcommand{\wrt}{\text{w.r.t. }}       
\newcommand{\cf}{\text{cf. }}           

\newtcolorbox{statementbox}[1][]
{
    enhanced,
    colback=blue!5!white,
    colframe=blue!60!black,
    coltitle=white,
    fonttitle=\bfseries,
    boxrule=0pt,
    bottomrule=1.5pt,
    arc=0pt,
    auto outer arc,
    left=8pt, right=8pt, top=5pt, bottom=5pt,
    toptitle=2pt, bottomtitle=2pt,
    before skip=8pt, after skip=8pt,
    breakable,
    #1
}

\definecolor{bocchipink}{HTML}{F3A6B2}
\definecolor{nijikayellow}{HTML}{F8DD74}
\definecolor{ryoblue}{HTML}{6680AE}
\definecolor{kitared}{HTML}{DD6D63}

\tcbset{
    kessokustyle/.style={
        enhanced, breakable,
        colback=#1!8!white,
        colframe=#1!90!black,
        boxrule=0pt, leftrule=3pt, arc=0pt, auto outer arc,
        left=8pt, right=8pt, top=5pt, bottom=5pt,
        before skip=8pt, after skip=8pt,
    },
}

\newcommand{\mymethod}{KL-MOPD\xspace}

\copyrightyear{2027}
\acmYear{2027}
\setcopyright{cc}
\setcctype{by-nc-nd}
\acmConference[arXiv 2026] {the arXiv Preprints}{October, 2026}{Hangzhou, China.}

\makeatletter
\newcommand{\authornotetext}[2]{%
  \if@ACM@anonymous\else
    \g@addto@macro\@authornotes{%
      \footnotetext[#1]{#2}}%
  \fi
}
\makeatother
\makeatletter
\let\ACMOriginalFnSymbol\@fnsymbol
\renewcommand*{\@fnsymbol}[1]{%
  \ifcase#1\relax
  \or
    \textnormal{\Letter}
  \or
    \ACMOriginalFnSymbol{2}
  \else
    \textnormal{\number\numexpr#1-2\relax}
  \fi
}
\makeatother

\begin{document}

\title[KL-MOPD: Discrepancy-Aware Multi-Objective On-Policy Distillation for Generative Recommendation]{
KL-MOPD: Discrepancy-Aware Multi-Objective On-Policy Distillation for Generative Recommendation
}


\settopmatter{authorsperrow=4}

\author{Weiqin Yang}
    \orcid{0000-0002-5750-5515}
    \affiliation{
        \institution{Zhejiang University}
        \city{Hangzhou}
        \country{China}
    }
    \email{tinysnow@zju.edu.cn}
\authornotemark[2]
\authornotemark[3]
\authornotemark[4]

\author{Yandong Wang}
    \orcid{0009-0003-6047-1728}
    \affiliation{
        \institution{Alibaba Group}
        \city{Beijing}
        \country{China}
    }
    \email{xvzhi.wyd@alibaba-inc.com}
\authornotemark[2]
\authornotemark[4]

\author{Yuan Gao}
    \orcid{0009-0004-2055-9941}
    \affiliation{
        \institution{Alibaba Group}
        \city{Beijing}
        \country{China}
    }
    \email{jianbei.gy@alibaba-inc.com}
\authornotemark[2]
\authornotemark[4]

\author{Hao Deng}
    \orcid{0009-0002-6335-7405}
    \affiliation{
        \institution{Alibaba Group}
        \city{Beijing}
        \country{China}
    }
    \email{denghao.deng@alibaba-inc.com}
\authornotemark[4]

\author{Jinxin Hu}
    \orcid{0000-0002-7252-5207}
    \affiliation{
        \institution{Alibaba Group}
        \city{Beijing}
        \country{China}
    }
    \email{jinxin.hjx@alibaba-inc.com}
\authornotemark[1]
\authornotemark[4]

\author{Jing Cao}
    \orcid{0000-0002-9485-4204}
    \affiliation{
        \institution{Zhejiang University}
        \city{Hangzhou}
        \country{China}
    }
    \email{cjcongjia@zju.edu.cn}

\author{Jiawei Chen}
    \orcid{0000-0002-4752-2629}
    \affiliation{
        \institution{Zhejiang University}
        \city{Hangzhou}
        \country{China}
    }
    \email{sleepyhunt@zju.edu.cn}
\authornotemark[1]
\authornotemark[3]

\authornotetext{2}{Equal contribution.}
\authornotetext{1}{Corresponding author.}
\authornotetext{3}{College of Computer Science and Technology,
  Zhejiang University.}
\authornotetext{4}{Alibaba International Digital Commerce Group.}

\ifanonymous
\else
    \renewcommand{\shortauthors}{Weiqin Yang et al.}
\fi



\begin{abstract}
    Coordinating multiple recommendation objectives (\eg click and conversion) is critical for real-world generative recommendation (GR). However, since a positive item in GR can be retrieved only when its tokens are successively selected at every decoding step, traditional item-wise methods lack fine-grained token-wise supervision to guide intermediate steps. Motivated by its success in foundation models, we consider introducing \emph{multi-objective on-policy distillation} (MOPD) into GR to distill two single-objective teachers into a student at each token. Yet, we find that in vanilla MOPD, (i) teachers exhibit unequal distillation difficulty at each token (\ie \emph{token-level imbalance}), and (ii) the harder-to-distill teacher varies across tokens (\ie \emph{cross-token inconsistency}), leaving each objective poorly learned at certain steps and hindering positive item retrieval for both objectives, leading to suboptimal performance.

    To address this challenge, we propose \textbf{\mymethod} (\textbf{\uline{KL}}-Weighted \textbf{\uline{M}}ulti-Objective \textbf{\uline{O}}n-\textbf{\uline{P}}olicy \textbf{\uline{D}}istillation), a simple yet effective distillation method that allocates token-wise teacher weights proportional to the initial student-teacher token-wise KL divergences, thereby strengthening supervision for the harder-to-distill teacher at every decoding step. Despite its simplicity, we theoretically prove that \mymethod strictly outperforms vanilla MOPD in the additive joint recommendation objective under mild conditions. Moreover, \mymethod is plug-and-play and computationally efficient, requiring only a one-time pre-training weight calibration with negligible training overhead. Extensive experiments on three public benchmarks, a billion-scale industrial dataset, and online A/B tests on our e-commerce advertising platform demonstrate that \mymethod significantly improves both individual and joint recommendation performance. Source code: \url{https://github.com/Yandorn/KL-MOPD}.
\end{abstract}

\begin{CCSXML}
<ccs2012>
    <concept>
        <concept_id>10002951.10003317.10003347.10003350</concept_id>
        <concept_desc>Information systems~Recommender systems</concept_desc>
        <concept_significance>500</concept_significance>
    </concept>
</ccs2012>
\end{CCSXML}

\ccsdesc[500]{Information systems~Recommender systems}

\keywords{Generative Recommendation; Multi-Objective Alignment; On-Policy Distillation}



\maketitle



\section{Introduction} \label{sec:introduction}

Generative recommendation (GR) has recently emerged as a transformative paradigm in recommender systems, garnering widespread attention across both academia and industry~\citep{rajput2023recommender,wang2024learnable,zheng2024adapting,deng2025onerec}. Unlike traditional recommenders that score and retrieve whole items, GR represents each item as a sequence of discrete semantic tokens and directly generates target items via token-by-token autoregressive decisions, unlocking remarkable scalability and generalization capabilities~\citep{rajput2023recommender,van2017neural,lee2022autoregressive}. Meanwhile, this multi-token generation mechanism imposes stricter demands on optimization: a positive item can be retrieved only when its tokens are successively selected at every decoding step, whereas failing to include the target token at any step prevents the item from being retrieved~\citep{yang2026bear,yu2026apao}.

\begin{figure*}[t]
    \centering
    \includegraphics[width=0.95\textwidth]{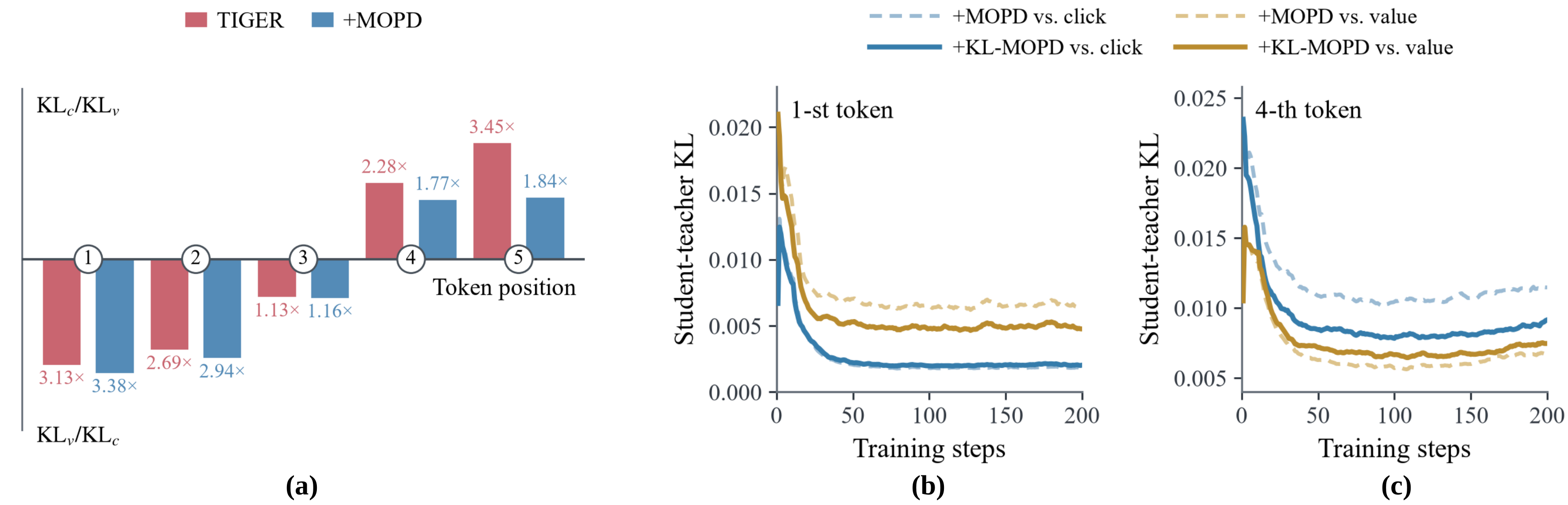}
    \caption{
        KL divergence between the student and teachers at each token using TIGER~\citep{rajput2023recommender} recommender on the TencentGR dataset~\citep{pan2026tencent}. Here, the click and value teachers are dedicated to the CTR and CVR objectives, respectively.
        \textbf{(a)} Ratios between the student-teacher KL divergences across tokens \#1 to \#5 for the initial student (TIGER) and the MOPD-trained student (+MOPD). Specifically, $\text{KL}_v / \text{KL}_c > 1$ indicates that the student is closer to the click teacher and is plotted below the axis, while $\text{KL}_c / \text{KL}_v > 1$ is plotted above.
        \textbf{(b, c)} Training curves of student-teacher KL at the 1-st and 4-th tokens, comparing +MOPD and our +\mymethod.
        Notably, MOPD inherits the token-level imbalance and cross-token shift of the initial student (\ie under-fitting the value teacher over the first three tokens and the click teacher over the last two tokens). In contrast, \mymethod effectively narrows the gap to the harder-to-distill teacher at each token, achieving more balanced distillation across decoding steps.
    }
    \Description{Illustration of the motivation behind KL-MOPD.}
    \label{fig:motivation}
\end{figure*}

\noindentparnoline{Multi-Objective GR.}
While existing studies on GR optimization primarily focus on a single recommendation objective, real-world recommender systems are typically required to balance multiple objectives simultaneously~\citep{xie2021personalized,huang2024data,tang2024multi}. For instance, in e-commerce scenarios, click-through rate (CTR) captures initial user engagement with recommended items, whereas conversion rate (CVR) reflects actual purchase conversions that drive transactional value~\citep{gu2020deep,zeng2021enhancing,meng2025generative}. However, these objectives do not always align: optimizing one in isolation far from guarantees a commensurate improvement in the other~\citep{lin2019pareto,pei2019value,weiss2025clicks}. Consequently, effectively coordinating these distinct objectives to improve joint recommendation performance remains a critical challenge for GR.

Over the past decade, extensive efforts have been devoted to multi-objective recommendation. Specifically, existing techniques coordinate multiple objectives either by directly balancing multi-objective losses and gradients~\citep{sener2018multi,yu2020gradient,liu2021conflict,lin2019pareto,pei2019value,xie2021personalized}, or by transferring and fusing model capabilities via item rankings~\citep{tang2018ranking,liu2020developing,kang2020rrd,yang2022cross,zhu2026rejuvenating}, hidden representations~\citep{kang2021topology,zhu2025exploring,ijcai2026p348}, and prediction scores~\citep{xv2022lightweight,tang2024multi,cao2025xmtf,he2025end}. However, being designed for traditional recommenders that score whole items, these methods operate only at the coarse \emph{item-wise} level. When adapted to GR, even if multiple objectives are aligned at the item level, the lack of fine-grained \emph{token-wise} supervision prevents the model from coordinating these objectives at intermediate decoding steps, which can still cause retrieval failures and lead to suboptimal joint recommendation performance~\citep{yang2026bear,yu2026apao}.

\noindentparnoline{Multi-Objective On-Policy Distillation.}
To overcome the limitations of traditional item-wise methods in GR, we consider introducing \emph{multi-objective on-policy distillation} (MOPD)~\citep{ma2026mopd,wang2026smopd}, a paradigm that has achieved remarkable success in merging specialized experts in foundation models. Following the standard multi-teacher distillation framework~\citep{liu2020developing,tang2024multi}, MOPD trains separate teacher models dedicated to individual objectives and distills their capabilities into a single student model. Specifically, it samples trajectories on-policy from the student and minimizes the average (\ie equally weighted) reverse Kullback-Leibler (KL) divergence from the student to each teacher's next-token probability distribution at every decoding step~\citep{agarwal2024policy,team2026onereason}. This mechanism makes MOPD particularly well-suited for GR due to two distinct advantages: (i) \emph{on-policy learning}, which optimizes within the student's own generation space and is equivalent to directly maximizing the KL-regularized joint recommendation objective (\cf \cref{sec:preliminaries:mopd}); and (ii) \emph{token-wise supervision}, which goes beyond coarse item-wise signals to provide fine-grained supervision at each token, naturally fitting the multi-token decoding process of GR.

\noindentparnoline{Challenges in MOPD.}
However, when directly applying vanilla MOPD, which assigns equal weights to both teachers at every decoding step, we make two key observations in GR:

\begin{itemize}[topsep=3pt,leftmargin=10pt,itemsep=3pt]
\item \emph{Teachers exhibit unequal distillation difficulty at each token.} At each token, we observe that the initial student is typically closer to one teacher while more distant from the other (\ie the \emph{harder-to-distill teacher}). For instance, at the first token of TIGER~\citep{rajput2023recommender} recommender on the TencentGR~\citep{pan2026tencent} dataset, the KL divergence from the initial student to the value teacher (for the CVR objective) is 3.13$\times$ that to the click teacher (for the CTR objective; \cf \cref{fig:motivation}(a)). Since vanilla MOPD treats both teachers equally regardless of this initial gap, the student continues to favor the closer teacher throughout training while leaving the harder-to-distill one severely under-fitted, resulting in \emph{imbalanced} capability transfer \emph{at each token} (\cf \cref{fig:motivation}(b, c)).

\item \emph{The harder-to-distill teacher varies across tokens.} Across decoding steps, which teacher is harder to distill does not remain fixed, but shifts between teachers along the token sequence. As \cref{fig:motivation}(a) exemplifies, the harder-to-distill teacher is the value teacher over the first three tokens, but becomes the click teacher over the last two tokens on the TencentGR dataset. Since vanilla MOPD applies static equal weights at all steps without adapting to this shift, the student continues to under-fit different teachers in the early or late decoding stages, resulting in \emph{inconsistent} capability transfer \emph{across tokens} (\cf \cref{fig:motivation}(a, b, c)).

\end{itemize}

\noindent
The above observations of \emph{token-level imbalance} and \emph{cross-token inconsistency} reveal a critical challenge for MOPD in GR. Specifically, for each objective, there always exist certain tokens where the student fails to adequately learn the corresponding teacher's decisions; since retrieving a positive item in GR requires selecting the appropriate token at every step, both objectives ultimately suffer from suboptimal performance (\cf \cref{tab:student-teacher-kl}).

\noindentparnoline{Our Method.}
To address this challenge, we propose \textbf{\mymethod}, a simple yet effective distillation method that allocates token-wise teacher supervision weights based on student-teacher discrepancies. Specifically, at each token, \mymethod assigns teacher weights in proportion to the token-wise KL divergence from the initial student to each teacher. In this way, \mymethod strengthens supervision for the harder-to-distill teacher (which is more distant from the initial student) at each token. This reduces the probability of decision errors for the poorly learned objective at each token, making the student more likely to select correct tokens at every step and retrieve positive items for both objectives. 

Despite its simplicity, \mymethod is backed by rigorous \emph{theoretical guarantees}. Specifically, for the canonical dual-objective setting~\citep{tang2024multi,yi2024dmmp}, we prove that \mymethod strictly outperforms vanilla MOPD in the additive joint recommendation objective under mild conditions, ensuring more effective multi-objective distillation (\cf \cref{theorem:kl-mopd-joint-objective}). Empirically, compared to vanilla MOPD, \mymethod achieves more balanced teacher fitting at every token (\cf \cref{fig:motivation}(b, c)) and further reduces the average item-wise KL divergence to both teachers by 13.70\%, consistently improving both individual and joint recommendation objectives (\cf \cref{tab:student-teacher-kl}).

In addition, \mymethod enjoys two appealing advantages for practical applications: (i) \emph{Plug-and-Play Design.} \mymethod can be seamlessly integrated into existing MOPD pipelines by simply reweighting the token-wise distillation loss, requiring neither auxiliary models nor architectural modifications. (ii) \emph{Computational Efficiency.} Since the teacher weights depend solely on the initial student and can be pre-computed only once prior to training, \mymethod introduces no extra forward passes during training and incurs negligible computational overhead. These properties make \mymethod a principled and practical solution for multi-objective GR.

We evaluate \mymethod on three public benchmark datasets using two advanced recommendation backbones against eight state-of-the-art baselines, most of which are newly proposed in 2026. Experimental results demonstrate that \mymethod significantly outperforms the strongest baseline on individual recommendation objectives (avg. +4.91\%), while surpassing both the strongest baseline and single-objective teachers on the joint recommendation objective (avg. +4.12\%). Furthermore, we validate \mymethod on a real-world e-commerce advertising platform: it improves the offline joint metric by 3.19\% on a billion-scale industrial dataset (over 1B interactions across 49M users and 158M ads), and achieves 1.63\% and 2.20\% uplifts in advertising revenue and gross merchandise volume (GMV), respectively, in \emph{online A/B tests with 10\% live traffic}.

\begin{statementbox}[title={Our Contributions}]
\begin{itemize}[topsep=0pt,leftmargin=3pt,itemsep=0pt]
\item We explore MOPD for multi-objective GR and uncover a critical challenge in vanilla MOPD from two key observations: (i) teachers exhibit unequal distillation difficulty at each token (\emph{token-level imbalance}), and (ii) the harder-to-distill teacher varies across tokens (\emph{cross-token inconsistency}), which together hinder positive item retrieval for both objectives.
\item We propose \mymethod, a plug-and-play and computationally efficient method that assigns token-wise teacher weights proportional to the initial student-teacher token-wise KL divergences, and theoretically prove that it strictly outperforms vanilla MOPD in the additive joint recommendation objective under mild conditions.
\item Extensive experiments on three public benchmarks, a billion-scale industrial dataset, and production online A/B tests demonstrate that \mymethod significantly improves both individual and joint recommendation performance.
\end{itemize}
\end{statementbox}


\section{Preliminaries} \label{sec:preliminaries}

In this section, we formulate the multi-objective generative recommendation task in \cref{sec:preliminaries:task-formulation}, introduce the multi-objective on-policy distillation (MOPD) paradigm and its potential for GR in \cref{sec:preliminaries:mopd}, and investigate the critical challenges of directly applying vanilla MOPD to GR in \cref{sec:preliminaries:challenges}.

\subsection{Task Formulation} \label{sec:preliminaries:task-formulation}

\noindentparnolinenospace{Generative Recommendation.}
In this work, we focus on generative recommendation (GR)~\citep{rajput2023recommender,wang2024learnable,zheng2024adapting,deng2025onerec}, where each item is represented as a sequence of semantic tokens $y = (y_1, y_2, \cdots, y_T)$. Here, $y_t$ denotes the $t$-th token and $T$ is the sequence length. Formally, given a user request $x \sim \mathcal{X}$ containing historical interactions and relevant context, a generative recommender with policy $\pi_\theta$ recommends a target item $y$ according to the conditional distribution $\pi_\theta(y | x)$, which is autoregressively factorized into token-wise probabilities $\pi_\theta(y | x) = \prod_{t=1}^{T} \pi_\theta(y_t | x, y_{<t})$, where $y_{<t} = (y_1, y_2, \cdots, y_{t-1})$ denotes the prefix tokens generated prior to step $t$. Due to this multiplicative factorization, successfully retrieving a positive item $y$ imposes strict demands on optimization: the policy $\pi_\theta(\cdot | x, y_{<t})$ must accurately select the appropriate token $y_t$ at every decoding step $t$, as a decision error at any step will deviate from the target prefix and lead to retrieval failure~\citep{yang2026bear,yu2026apao}.

\begin{table*}[t]
  \centering
  \caption{
    KL divergence between the student and teachers using the TIGER~\citep{rajput2023recommender} recommender on the TencentGR~\citep{pan2026tencent} and YooChoose~\citep{ben2015recsys} datasets.
    Here, TIGER denotes the initial student, while TIGER-C and TIGER-V represent the click and value teachers, respectively.
    The columns "$\text{KL}(y_t)$" ($t = 1, \cdots, 5$) and "$\text{KL}(y)$" report the token-wise and item-wise student-teacher KL divergences, respectively, and "$\Delta\text{KL}(y)$" indicates the relative reduction in $\text{KL}(y)$ compared to TIGER.
    As shown, \mymethod achieves more balanced token-wise KL divergence and further reduces the item-wise KL divergence \wrt both teachers.
  }
  \label{tab:student-teacher-kl}
  \resizebox{\textwidth}{!}{%
    \begin{tabular}{l|l|ccccc>{\columncolor[rgb]{0.93,0.96,1.00}}c>{\columncolor[rgb]{0.93,0.96,1.00}}c|ccccc>{\columncolor[rgb]{0.93,0.96,1.00}}c>{\columncolor[rgb]{0.93,0.96,1.00}}c}
    \toprule
    \multirow{2}{*}{\textbf{Methods}} & \multirow{2}{*}{\textbf{KL w.r.t.}} & \multicolumn{7}{c|}{\textbf{TencentGR}} & \multicolumn{7}{c}{\textbf{YooChoose}} \\
    \cmidrule(lr){3-9} \cmidrule(lr){10-16}
    & & $\bm{\mathrm{KL}(y_1)}$ & $\bm{\mathrm{KL}(y_2)}$ & $\bm{\mathrm{KL}(y_3)}$ & $\bm{\mathrm{KL}(y_4)}$ & $\bm{\mathrm{KL}(y_5)}$ & $\bm{\mathrm{KL}(y)}$ & $\bm{\Delta\mathrm{KL}(y)}$ & $\bm{\mathrm{KL}(y_1)}$ & $\bm{\mathrm{KL}(y_2)}$ & $\bm{\mathrm{KL}(y_3)}$ & $\bm{\mathrm{KL}(y_4)}$ & $\bm{\mathrm{KL}(y_5)}$ & $\bm{\mathrm{KL}(y)}$ & $\bm{\Delta\mathrm{KL}(y)}$ \\
    \midrule
    \multirow{2}{*}{TIGER} & TIGER-C & 0.00637 & 0.02301 & 0.03036 & 0.02319 & 0.05058 & 0.13351 & -- & 0.00860 & 0.01018 & 0.04381 & 0.03134 & 0.03326 & 0.12719 & -- \\
     & TIGER-V & 0.01995 & 0.06189 & 0.03437 & 0.01019 & 0.01467 & 0.14107 & -- & 0.02820 & 0.02511 & 0.02884 & 0.00998 & 0.00836 & 0.10049 & -- \\
    \cmidrule(lr){1-2} \cmidrule(lr){3-9} \cmidrule(lr){10-16}
    \multirow{2}{*}{+MOPD} & TIGER-C & 0.00191 & 0.00683 & 0.00981 & 0.01143 & 0.02920 & 0.05918 & -55.67\% & 0.00680 & 0.00593 & 0.01722 & 0.01390 & 0.01401 & 0.05786 & -54.51\% \\
     & TIGER-V & 0.00645 & 0.02006 & 0.01137 & 0.00645 & 0.01588 & 0.06021 & -57.32\% & 0.01385 & 0.01125 & 0.01203 & 0.00628 & 0.00507 & 0.04848 & -51.76\% \\
    \cmidrule(lr){1-2} \cmidrule(lr){3-9} \cmidrule(lr){10-16}
    \multirow{2}{*}{\textbf{+\mymethod}} & TIGER-C & 0.00227 & 0.00737 & 0.00788 & 0.00899 & 0.02421 & \textbf{0.05072} & \textbf{-62.01\%} & 0.00852 & 0.00683 & 0.01269 & 0.01002 & 0.00970 & \textbf{0.04776} & \textbf{-62.45\%} \\
     & TIGER-V & 0.00428 & 0.01333 & 0.00867 & 0.00701 & 0.01859 & \textbf{0.05188} & \textbf{-63.22\%} & 0.01034 & 0.00821 & 0.01136 & 0.00774 & 0.00679 & \textbf{0.04444} & \textbf{-55.78\%} \\
    \bottomrule
    \end{tabular}
  }
\end{table*}

\noindentparnoline{Multi-Objective Recommendation.}
Following the canonical dual-objective setting~\citep{tang2024multi,yi2024dmmp}, we consider two representative recommendation objectives, namely \emph{click} (C) and \emph{value} (V), characterized by objective functions $u_c(x, y)$ and $u_v(x, y)$, respectively.\footnote{We use click and value merely for ease of exposition; our method and theoretical analysis naturally apply to any pair of recommendation objectives.} Specifically, $u_c(x, y)$ measures user click engagement~\citep{zhou2018deep} with item $y$, whereas $u_v(x, y)$ captures downstream transactional value, such as conversion~\citep{ma2018entire}, GMV~\citep{meng2025generative}, or eCPM~\citep{xue2026generative}. Here, we assume that each objective function is non-negative and bounded, \ie $0 \leq u_j(x, y) \leq L_j$ for $j \in \{c, v\}$. For any policy $\pi$, its performance on objective $j \in \{c, v\}$ is evaluated by the expected objective value $\mathbb{E}_{x \sim \mathcal{X}, \, y \sim \pi(\cdot | x)}[u_j(x, y)]$. Ultimately, our goal is to learn a generative recommendation policy $\pi_\theta$ that maximizes the following conventional \emph{additive joint recommendation objective}:\footnote{Additive aggregation is widely adopted for combining multiple recommendation objectives~\citep{lin2019pareto,xie2021personalized,meng2025generative}. Other common formulations can also be easily converted into this additive form: for instance, a weighted sum $w_c u_c + w_v u_v$ for varying business priorities can be reduced by absorbing the weights (\ie $u_j \leftarrow w_j u_j$), while a multiplicative objective can be transformed into an additive one by taking the logarithm (\ie $u_j \leftarrow \log u_j$). Thus, we adopt the average form in \eqnref{eq:additive-joint-objective} without loss of generality.}
\begin{equation} \label{eq:additive-joint-objective}
    \max_\theta \; \mathbb{E}_{x \sim \mathcal{X}, y \sim \pi_\theta(\cdot | x)} \left[ u_c(x, y) + u_v(x, y) \right].
\end{equation}

\subsection{Multi-Objective On-Policy Distillation} \label{sec:preliminaries:mopd}

To optimize the joint recommendation objective in \eqnref{eq:additive-joint-objective}, existing multi-objective methods~\citep{lin2019pareto,xie2021personalized,xv2022lightweight,tang2024multi} are primarily designed for traditional recommenders and coordinate objectives exclusively at the coarse \emph{item-wise} level. When adapted to GR, even if multiple objectives are aligned at the item level, the lack of fine-grained \emph{token-wise} supervision leaves intermediate decoding steps unguided.
To overcome this limitation, we consider introducing \emph{multi-objective on-policy distillation} (MOPD)~\citep{ma2026mopd,wang2026smopd}, a paradigm that has achieved remarkable success in merging specialized experts in generative foundation models, into multi-objective GR.

\noindentparnoline{Single-Objective Teachers.}
To guide the student policy $\pi_\theta$ initialized from the reference policy $\pi_{\theta_0}$, MOPD leverages two pre-trained single-objective teachers, $\pi_c$ and $\pi_v$, dedicated to the click and value objectives, respectively. Following standard practice~\citep{rafailov2023direct}, we assume that each teacher $j \in \{c, v\}$ optimizes the following KL-regularized recommendation objective maximization problem:
\begin{equation} \label{eq:kl-regularized-teacher-objective}
    \max_{\pi_j} \; \mathbb{E}_{y \sim \pi_j(\cdot | x)}[u_j(x, y)] - \beta \mathbb{D}_{\mathrm{KL}} \left( \pi_j(\cdot | x) \,\|\, \pi_{\theta_0}(\cdot | x) \right),
\end{equation}
where $\beta > 0$ is a coefficient shared by both teachers that controls the KL penalty \wrt the reference policy $\pi_{\theta_0}$. According to classical derivations~\citep{peters2007reinforcement,rafailov2023direct}, the optimization problem in \eqnref{eq:kl-regularized-teacher-objective} admits a unique closed-form optimal teacher policy:
\begin{equation} \label{eq:optimal-teacher-policy}
    \pi_j(y | x) \propto \pi_{\theta_0}(y | x) \exp\left( \frac{u_j(x, y)}{\beta} \right).
\end{equation}

\noindentparnoline{Multi-Objective On-Policy Distillation.}
During MOPD, both teachers $\pi_c$ and $\pi_v$ remain fixed, while the student $\pi_\theta$ is initialized from the reference policy $\pi_{\theta_0}$ and updated on-policy. Specifically, for each request $x \sim \mathcal{X}$, the student first samples a target item sequence $y \sim \pi_\theta(\cdot | x)$ using its current policy. Conditioned on each student-generated prefix $(x, y_{<t})$, the two teachers then provide their next-token probability distributions $\pi_j(\cdot | x, y_{<t})$ ($j \in \{c, v\}$) to guide the student's decisions. Adopting the reverse KL divergence~\citep{agarwal2024policy,ma2026mopd} with equal teacher weights ($1/2$) at every decoding step, vanilla MOPD minimizes the item-wise KL divergence, which can be decomposed into token-wise objectives via the chain rule:
\begin{equation} \label{eq:vanilla-mopd-loss}
\begin{aligned}
    & \mathcal{L}_{\text{MOPD}}(\theta)
    = \mathbb{E}_{x \sim \mathcal{X}} \left[ \frac{1}{2} \sum_{j \in \{c, v\}} \mathbb{D}_{\mathrm{KL}} \left( \pi_\theta(\cdot | x) \,\|\, \pi_j(\cdot | x) \right) \right] \\
    & = \mathbb{E}_{\substack{x \sim \mathcal{X} \\ y \sim \pi_\theta(\cdot | x)}} \left[ \sum_{t=1}^{T} \frac{1}{2} \sum_{j \in \{c, v\}} \mathbb{D}_{\mathrm{KL}} \left( \pi_\theta(\cdot | x, y_{<t}) \,\|\, \pi_j(\cdot | x, y_{<t}) \right) \right].
\end{aligned}
\end{equation}

\noindentparnoline{Discussions: Why MOPD for GR?}
Compared to traditional multi-objective recommendation methods, MOPD in \eqnref{eq:vanilla-mopd-loss} is particularly well-suited for generative recommendation in two aspects:

\begin{itemize}[topsep=3pt,leftmargin=10pt,itemsep=3pt]

\item \emph{On-Policy Learning.}
MOPD optimizes the student policy $\pi_\theta$ within its own generation space and is equivalent to directly maximizing the KL-regularized additive joint objective (\cf \cref{eq:additive-joint-objective,eq:kl-regularized-teacher-objective}). Specifically, by substituting the optimal teacher policies from \eqnref{eq:optimal-teacher-policy} into \eqnref{eq:vanilla-mopd-loss}, we can rewrite the MOPD loss up to a $\theta$-independent constant $C$ as:
\begin{equation} \label{eq:mopd-joint-equivalence}
\begin{aligned}
    2\beta \mathcal{L}_{\text{MOPD}}(\theta) 
    & = -\mathbb{E}_{\substack{x \sim \mathcal{X} \\ y \sim \pi_\theta(\cdot | x)}} \left[ u_c(x, y) + u_v(x, y) \right] \\
    & \quad + 2\beta \mathbb{E}_{x \sim \mathcal{X}} \left[ \mathbb{D}_{\mathrm{KL}} \left( \pi_\theta(\cdot | x) \,\|\, \pi_{\theta_0}(\cdot | x) \right) \right] + C.
\end{aligned}
\end{equation}
This equivalence reveals why on-policy learning is essential: evaluating teacher feedback on the student's self-generated trajectories $y \sim \pi_\theta(\cdot | x)$ directly maximizes its expected joint objective $u_c(x, \cdot) + u_v(x, \cdot)$~\citep{agarwal2024policy}, whereas off-policy methods operate in static corpus or teacher rollout spaces, suffering from a distribution gap that deviates from the recommendation objective (\eqnref{eq:additive-joint-objective}) and leads to suboptimal performance.

\item \emph{Token-Wise Supervision.}
MOPD directly constrains the next-token probability distribution at every decoding step, naturally fitting the multi-token generation process of GR. Specifically, traditional item-wise methods merely align the overall item-level probability $\pi_\theta(y | x) = \prod_{t=1}^{T} \pi_\theta(y_t | x, y_{<t})$ without explicitly constraining each individual token probability $\pi_\theta(y_t | x, y_{<t})$. Consequently, even if the overall item probability is aligned, a low probability at any intermediate token $y_t$ can still cause a decision error during decoding and lead to retrieval failure~\citep{yang2026bear,yu2026apao}. In contrast, through the chain-rule decomposition in \eqnref{eq:vanilla-mopd-loss}, MOPD directly aligns the student's token-wise distribution $\pi_\theta(\cdot | x, y_{<t})$ with each teacher's $\pi_j(\cdot | x, y_{<t})$ at every step $t$, providing fine-grained supervision to mitigate intermediate decoding errors~\citep{yang2026bear}.

\end{itemize}

\subsection{The Pitfalls of MOPD in GR} \label{sec:preliminaries:challenges}

Although vanilla MOPD provides fine-grained token-wise supervision in \eqnref{eq:vanilla-mopd-loss}, it assigns a static, equal weight ($1/2$) to both teachers across all decoding steps. To examine its actual behavior in multi-objective GR, we conduct an empirical study (\cf \cref{fig:motivation,tab:student-teacher-kl}) by measuring the expected token-wise KL divergence $\mathbb{E}_{x \sim \mathcal{X}, \, y_{<t} \sim \pi_\theta(\cdot | x)} \left[ \mathbb{D}_{\mathrm{KL}} \left( \pi_\theta(\cdot | x, y_{<t}) \,\|\, \pi_j(\cdot | x, y_{<t}) \right) \right]$ from the initial student $\pi_{\theta_0}$ and the trained student $\pi_\theta$ to each teacher $\pi_j$ ($j \in \{c, v\}$) at each step $t$. Our results reveal two key observations:

\noindentparnoline{Observation 1: Teachers Exhibit Unequal Distillation Difficulty at Each Token.}
At each token $t$, we observe that the initial student $\pi_{\theta_0}$ is typically closer to one teacher while more distant from the other, which we refer to as the \emph{harder-to-distill teacher}. For instance, at the first token ($t = 1$) of TIGER~\citep{rajput2023recommender} on the TencentGR~\citep{pan2026tencent} dataset, the KL divergence from the initial student $\pi_{\theta_0}$ to the value teacher $\pi_v$ is $3.13\times$ that to the click teacher $\pi_c$ (\cf \cref{fig:motivation}(a) and \cref{tab:student-teacher-kl}). Since vanilla MOPD assigns equal weights to both teachers \emph{regardless of this initial gap}, the student $\pi_\theta$ continues to favor the closer teacher throughout training while leaving the harder-to-distill one severely under-fitted (\cf \cref{fig:motivation}(b) and \cref{tab:student-teacher-kl}). After training, the KL divergence from the student $\pi_\theta$ to the value teacher $\pi_v$ at the first token still remains $3.38\times$ that to the click teacher $\pi_c$ (\cf \cref{fig:motivation}(a)), resulting in \emph{imbalanced capability transfer at each token} (\ie \emph{\textbf{token-level imbalance}}).

\noindentparnoline{Observation 2: The Harder-to-Distill Teacher Varies Across Tokens.}
Across decoding steps, which teacher is harder to distill does not remain fixed, but shifts between teachers along the token sequence. As \cref{fig:motivation}(a) exemplifies, the harder-to-distill teacher for the initial student $\pi_{\theta_0}$ is the value teacher $\pi_v$ over the first three tokens ($t = 1, 2, 3$), but becomes the click teacher $\pi_c$ over the last two tokens ($t = 4, 5$). Since vanilla MOPD applies static equal weights at all steps \emph{without adapting to this shift}, the student $\pi_\theta$ continues to under-fit different teachers in the early or late decoding stages throughout training (\cf \cref{fig:motivation}(b, c)). After training, the value teacher $\pi_v$ remains under-distilled over the first three tokens while the click teacher $\pi_c$ remains under-distilled over the last two tokens (\cf \cref{tab:student-teacher-kl}), resulting in \emph{inconsistent capability transfer across tokens} (\ie \emph{\textbf{cross-token inconsistency}}).

\begin{table}[t]
  \centering
  \caption{Statistics of the public and industrial datasets.}
  \label{tab:dataset-statistics}
  \small
  \setlength{\tabcolsep}{3pt}
  \renewcommand{\arraystretch}{1.12}
  \begin{tabular*}{\columnwidth}{@{\extracolsep{\fill}}l|rrrr}
    \toprule
    \textbf{Dataset} & \textbf{\#Users} & \textbf{\#Items} & \textbf{\#Interactions} & \textbf{Density} \\
    \midrule
    TencentGR & 49,803 & 1,462,146 & 4,755,945 & $6.5\times 10^{-5}$ \\
    YooChoose & 409,981 & 38,974 & 5,074,843 & $3.2\times 10^{-4}$ \\
    Taobao & 14,298 & 413,730 & 928,509 & $1.6\times 10^{-4}$ \\
    Industry & 49,277,990 & 158,533,241 & 1,492,770,272 & $1.9\times 10^{-7}$ \\
    \bottomrule
  \end{tabular*}
\end{table}

\noindentparnoline{Resulting Pitfalls.}
Combined, the above observations of \emph{token-level imbalance} and \emph{cross-token inconsistency} expose a critical bottleneck of vanilla MOPD in GR. Specifically, for either objective $j \in \{c, v\}$, there always exist certain decoding steps $t$ where the student distribution $\pi_\theta(\cdot | x, y_{<t})$ fails to adequately align with the teacher $\pi_j(\cdot | x, y_{<t})$. Under the autoregressive probability factorization $\pi_\theta(y | x) = \prod_{t=1}^{T} \pi_\theta(y_t | x, y_{<t})$, poor alignment at those steps makes the student prone to token decision errors, which derails the generation trajectory and ultimately compromises positive item retrieval for both objectives (\cf \cref{tab:student-teacher-kl}).

\begin{statementbox}[title={Challenge: Pitfalls of Vanilla MOPD in GR}]
By overlooking the varying distillation difficulties induced by dynamic student-teacher discrepancies, vanilla MOPD suffers from \emph{token-level imbalance} and \emph{cross-token inconsistency} distillation, leaving each objective poorly learned at certain tokens and hindering positive item retrieval for both objectives.
\end{statementbox}



\begin{table*}[t]
  \centering
  \caption{
    Overall performance comparison on TIGER~\citep{rajput2023recommender} and REG4Rec~\citep{xing2026reg4rec} backbones. The student is initialized with the pre-trained backbone and distilled from two teachers (\ie "+" in the method name). The columns "$u_c$" and "$u_v$" denote the individual objectives for the click and value teachers (\ie Backbone-C and Backbone-V), respectively, while the light-blue columns report their average $u_{cv} = (u_c + u_v)/2$ as the additive joint objective. The best results are highlighted in bold, and the second-best results are underlined. "\textcolor{darkred}{\textbf{Imp.\%}}" indicates the relative improvement of \mymethod over the strongest baseline, where $^{*}$ denotes that for the joint objective $u_{cv}$, the improvement is computed over both baselines and single-objective teachers.
  }
  \label{tab:main-results}
  \resizebox{\textwidth}{!}{%
    \begin{tabular}{l|cc>{\columncolor[rgb]{0.93,0.96,1.00}}ccc>{\columncolor[rgb]{0.93,0.96,1.00}}c|cc>{\columncolor[rgb]{0.93,0.96,1.00}}ccc>{\columncolor[rgb]{0.93,0.96,1.00}}c|cc>{\columncolor[rgb]{0.93,0.96,1.00}}ccc>{\columncolor[rgb]{0.93,0.96,1.00}}c}
    \toprule
    \multicolumn{1}{c|}{\multirow{2}[4]{*}{\textbf{Methods}}} & \multicolumn{6}{c|}{\textbf{TencentGR}} & \multicolumn{6}{c|}{\textbf{YooChoose}} & \multicolumn{6}{c}{\textbf{Taobao}} \\
    \cmidrule(lr){2-7} \cmidrule(lr){8-13} \cmidrule(lr){14-19}
    & \textbf{\boldmath$u_{c}$@10} & \textbf{\boldmath$u_{v}$@10} & \textbf{\boldmath$u_{cv}$@10} & \textbf{\boldmath$u_{c}$@20} & \textbf{\boldmath$u_{v}$@20} & \textbf{\boldmath$u_{cv}$@20} & \textbf{\boldmath$u_{c}$@10} & \textbf{\boldmath$u_{v}$@10} & \textbf{\boldmath$u_{cv}$@10} & \textbf{\boldmath$u_{c}$@20} & \textbf{\boldmath$u_{v}$@20} & \textbf{\boldmath$u_{cv}$@20} & \textbf{\boldmath$u_{c}$@10} & \textbf{\boldmath$u_{v}$@10} & \textbf{\boldmath$u_{cv}$@10} & \textbf{\boldmath$u_{c}$@20} & \textbf{\boldmath$u_{v}$@20} & \textbf{\boldmath$u_{cv}$@20} \\
    \midrule
    \textbf{TIGER} & 0.0108 & 0.0082 & 0.0095 & 0.0191 & 0.0158 & 0.0175 & 0.4115 & 0.5016 & 0.4566 & 0.5214 & 0.6034 & 0.5624 & 0.0434 & 0.1325 & 0.0880 & 0.0574 & 0.1672 & 0.1123 \\
    TIGER-C & \uline{0.0286} & 0.0164 & \uline{0.0225} & \textbf{0.0440} & 0.0265 & \uline{0.0353} & \textbf{0.4359} & 0.5193 & 0.4776 & \textbf{0.5447} & 0.6188 & 0.5818 & \uline{0.0665} & 0.1963 & 0.1314 & \uline{0.0854} & 0.2500 & 0.1677 \\
    TIGER-V & 0.0209 & \textbf{0.0189} & 0.0199 & 0.0321 & \textbf{0.0334} & 0.0328 & 0.3810 & \textbf{0.5849} & 0.4830 & 0.4934 & \textbf{0.6889} & \uline{0.5912} & 0.0627 & \uline{0.2006} & \uline{0.1317} & 0.0818 & \uline{0.2561} & \uline{0.1690} \\
    \cmidrule(lr){1-1} \cmidrule(lr){2-7} \cmidrule(lr){8-13} \cmidrule(lr){14-19}
    +RCE-KD & 0.0258 & 0.0173 & 0.0216 & 0.0377 & 0.0290 & 0.0334 & 0.4125 & 0.5216 & 0.4671 & 0.5326 & 0.6351 & 0.5839 & 0.0605 & 0.1839 & 0.1222 & 0.0769 & 0.2228 & 0.1499 \\
    +MTS$^2$4MM & 0.0232 & 0.0154 & 0.0193 & 0.0356 & 0.0280 & 0.0318 & 0.4190 & 0.5276 & 0.4733 & 0.5291 & 0.6455 & 0.5873 & 0.0637 & 0.1937 & 0.1287 & 0.0805 & 0.2411 & 0.1608 \\
    +SmartGR & 0.0253 & 0.0155 & 0.0204 & 0.0381 & 0.0284 & 0.0333 & 0.4134 & 0.5291 & 0.4713 & 0.5231 & 0.6317 & 0.5774 & 0.0629 & 0.1873 & 0.1251 & 0.0813 & 0.2367 & 0.1590 \\
    +GR4AD & 0.0241 & 0.0168 & 0.0205 & 0.0381 & 0.0268 & 0.0325 & 0.4229 & 0.5226 & 0.4728 & 0.5328 & 0.6233 & 0.5781 & 0.0637 & 0.1918 & 0.1278 & 0.0817 & 0.2452 & 0.1635 \\
    +TSGR & 0.0247 & 0.0165 & 0.0206 & 0.0381 & 0.0274 & 0.0328 & 0.4130 & 0.5231 & 0.4681 & 0.5328 & 0.6238 & 0.5783 & 0.0636 & 0.1924 & 0.1280 & 0.0817 & 0.2442 & 0.1630 \\
    +MOPD & 0.0261 & 0.0164 & 0.0213 & 0.0380 & 0.0280 & 0.0330 & 0.4186 & 0.5265 & 0.4726 & 0.5301 & 0.6309 & 0.5805 & 0.0652 & 0.1941 & 0.1297 & 0.0844 & 0.2517 & 0.1681 \\
    +SMOPD & 0.0253 & 0.0170 & 0.0212 & 0.0384 & 0.0280 & 0.0332 & 0.4179 & 0.5517 & \uline{0.4848} & 0.5271 & 0.6549 & 0.5910 & 0.0635 & 0.1931 & 0.1283 & 0.0836 & 0.2503 & 0.1670 \\
    +TrustMOPD & 0.0256 & 0.0164 & 0.0210 & 0.0375 & 0.0284 & 0.0330 & 0.4046 & 0.5492 & 0.4769 & 0.5153 & 0.6573 & 0.5863 & 0.0616 & 0.1883 & 0.1250 & 0.0802 & 0.2452 & 0.1627 \\
    \cmidrule(lr){1-1} \cmidrule(lr){2-7} \cmidrule(lr){8-13} \cmidrule(lr){14-19}
    \textbf{+\mymethod} & \textbf{0.0294} & \uline{0.0186} & \textbf{0.0240}$^{*}$ & \uline{0.0417} & \uline{0.0312} & \textbf{0.0365}$^{*}$ & \uline{0.4327} & \uline{0.5741} & \textbf{0.5034}$^{*}$ & \uline{0.5411} & \uline{0.6762} & \textbf{0.6087}$^{*}$ & \textbf{0.0668} & \textbf{0.2044} & \textbf{0.1356}$^{*}$ & \textbf{0.0865} & \textbf{0.2623} & \textbf{0.1744}$^{*}$ \\
    \textcolor{darkred}{\textbf{Imp.\%}} & \textcolor{darkred}{\textbf{+12.64\%}} & \textcolor{darkred}{\textbf{+7.51\%}} & \textcolor{darkred}{\textbf{+6.67\%}$^{*}$} & \textcolor{darkred}{\textbf{+8.59\%}} & \textcolor{darkred}{\textbf{+7.59\%}} & \textcolor{darkred}{\textbf{+3.40\%}$^{*}$} & \textcolor{darkred}{\textbf{+2.32\%}} & \textcolor{darkred}{\textbf{+4.06\%}} & \textcolor{darkred}{\textbf{+3.84\%}$^{*}$} & \textcolor{darkred}{\textbf{+1.56\%}} & \textcolor{darkred}{\textbf{+2.88\%}} & \textcolor{darkred}{\textbf{+2.96\%}$^{*}$} & \textcolor{darkred}{\textbf{+2.45\%}} & \textcolor{darkred}{\textbf{+5.31\%}} & \textcolor{darkred}{\textbf{+3.00\%}$^{*}$} & \textcolor{darkred}{\textbf{+2.49\%}} & \textcolor{darkred}{\textbf{+4.21\%}} & \textcolor{darkred}{\textbf{+3.23\%}$^{*}$} \\
    \midrule
    \textbf{REG4Rec} & 0.0285 & 0.0240 & 0.0263 & 0.0443 & 0.0388 & 0.0416 & 0.4091 & 0.5218 & 0.4655 & 0.5178 & 0.6266 & 0.5722 & 0.0671 & 0.1948 & 0.1310 & 0.0831 & 0.2302 & 0.1567 \\
    REG4Rec-C & \uline{0.0381} & 0.0262 & \uline{0.0322} & \uline{0.0547} & 0.0438 & \uline{0.0493} & \textbf{0.4238} & 0.5414 & 0.4826 & \textbf{0.5327} & 0.6427 & 0.5877 & 0.0754 & 0.2228 & 0.1491 & \uline{0.0933} & 0.2643 & 0.1788 \\
    REG4Rec-V & 0.0291 & \uline{0.0293} & 0.0292 & 0.0463 & \uline{0.0448} & 0.0456 & 0.3790 & \textbf{0.5862} & 0.4826 & 0.4917 & \textbf{0.6942} & \uline{0.5930} & 0.0743 & 0.2241 & 0.1492 & 0.0920 & 0.2681 & 0.1801 \\
    \cmidrule(lr){1-1} \cmidrule(lr){2-7} \cmidrule(lr){8-13} \cmidrule(lr){14-19}
    +RCE-KD & 0.0328 & 0.0277 & 0.0303 & 0.0523 & 0.0429 & 0.0476 & 0.4103 & 0.5401 & 0.4752 & 0.5192 & 0.6445 & 0.5819 & 0.0742 & 0.2166 & 0.1454 & 0.0904 & 0.2585 & 0.1745 \\
    +MTS$^2$4MM & 0.0315 & 0.0262 & 0.0289 & 0.0494 & 0.0422 & 0.0458 & 0.4134 & 0.5385 & 0.4760 & 0.5152 & 0.6380 & 0.5766 & 0.0637 & 0.1836 & 0.1237 & 0.0802 & 0.2241 & 0.1522 \\
    +SmartGR & 0.0330 & 0.0277 & 0.0304 & 0.0511 & 0.0419 & 0.0465 & 0.4112 & 0.5390 & 0.4751 & 0.5205 & 0.6431 & 0.5818 & 0.0678 & 0.1972 & 0.1325 & 0.0844 & 0.2394 & 0.1619 \\
    +GR4AD & 0.0304 & 0.0249 & 0.0277 & 0.0490 & 0.0425 & 0.0458 & 0.4127 & 0.5349 & 0.4738 & 0.5212 & 0.6474 & 0.5843 & 0.0636 & 0.1836 & 0.1236 & 0.0800 & 0.2245 & 0.1523 \\
    +TSGR & 0.0303 & 0.0249 & 0.0276 & 0.0478 & 0.0413 & 0.0445 & 0.4126 & 0.5355 & 0.4741 & 0.5206 & 0.6478 & 0.5842 & 0.0631 & 0.1812 & 0.1221 & 0.0790 & 0.2204 & 0.1497 \\
    +MOPD & 0.0338 & 0.0277 & 0.0308 & 0.0505 & 0.0422 & 0.0464 & 0.4107 & 0.5386 & 0.4747 & 0.5203 & 0.6450 & 0.5827 & 0.0744 & \uline{0.2282} & \uline{0.1513} & 0.0923 & 0.2674 & 0.1799 \\
    +SMOPD & 0.0344 & 0.0277 & 0.0311 & 0.0512 & 0.0436 & 0.0474 & 0.4105 & 0.5617 & \uline{0.4861} & 0.5204 & 0.6634 & 0.5919 & \uline{0.0761} & 0.2245 & 0.1503 & \uline{0.0933} & 0.2650 & 0.1792 \\
    +TrustMOPD & 0.0350 & 0.0284 & 0.0317 & 0.0518 & 0.0438 & 0.0478 & 0.3957 & 0.5642 & 0.4800 & 0.5062 & 0.6671 & 0.5867 & 0.0750 & 0.2241 & 0.1496 & 0.0932 & \uline{0.2708} & \uline{0.1820} \\
    \cmidrule(lr){1-1} \cmidrule(lr){2-7} \cmidrule(lr){8-13} \cmidrule(lr){14-19}
    \textbf{+\mymethod} & \textbf{0.0384} & \textbf{0.0303} & \textbf{0.0344}$^{*}$ & \textbf{0.0573} & \textbf{0.0451} & \textbf{0.0512}$^{*}$ & \uline{0.4221} & \uline{0.5855} & \textbf{0.5038}$^{*}$ & \uline{0.5322} & \uline{0.6902} & \textbf{0.6112}$^{*}$ & \textbf{0.0784} & \textbf{0.2401} & \textbf{0.1593}$^{*}$ & \textbf{0.0975} & \textbf{0.2793} & \textbf{0.1884}$^{*}$ \\
    \textcolor{darkred}{\textbf{Imp.\%}} & \textcolor{darkred}{\textbf{+9.71\%}} & \textcolor{darkred}{\textbf{+6.69\%}} & \textcolor{darkred}{\textbf{+6.84\%}$^{*}$} & \textcolor{darkred}{\textbf{+9.56\%}} & \textcolor{darkred}{\textbf{+2.97\%}} & \textcolor{darkred}{\textbf{+3.96\%}$^{*}$} & \textcolor{darkred}{\textbf{+2.10\%}} & \textcolor{darkred}{\textbf{+3.78\%}} & \textcolor{darkred}{\textbf{+3.64\%}$^{*}$} & \textcolor{darkred}{\textbf{+2.11\%}} & \textcolor{darkred}{\textbf{+3.46\%}} & \textcolor{darkred}{\textbf{+3.08\%}$^{*}$} & \textcolor{darkred}{\textbf{+3.02\%}} & \textcolor{darkred}{\textbf{+5.21\%}} & \textcolor{darkred}{\textbf{+5.25\%}$^{*}$} & \textcolor{darkred}{\textbf{+4.50\%}} & \textcolor{darkred}{\textbf{+3.14\%}} & \textcolor{darkred}{\textbf{+3.52\%}$^{*}$} \\
    \bottomrule
    \end{tabular}
  }
\end{table*}

\section{Methodology} \label{sec:methodology}

As revealed in \cref{sec:preliminaries:challenges}, vanilla MOPD is insufficient to handle the complex distillation dynamics in GR (\ie token-level imbalance and cross-token inconsistency). Motivated by this, we propose \textbf{\mymethod} (\textbf{\uline{KL}}-Weighted \textbf{\uline{M}}ulti-Objective \textbf{\uline{O}}n-\textbf{\uline{P}}olicy \textbf{\uline{D}}istillation), a simple yet principled method that leverages token-wise weighting based on student-teacher discrepancies to dynamically adapt to the varying distillation difficulties of different teachers across tokens. In the following, we first present the details of \mymethod in \cref{sec:methodology:kl-mopd}, and then analyze its theoretical guarantees and practical advantages compared to existing methods in \cref{sec:methodology:analysis}.

\subsection{KL-Weighted On-Policy Distillation} \label{sec:methodology:kl-mopd}

\noindentparnolinenospace{Measuring Distillation Difficulties by Discrepancies.}
Vanilla MOPD suffers from token-level imbalance and cross-token inconsistency in GR, where the student under-distills different harder-to-distill teachers across tokens (\cf \cref{sec:preliminaries:challenges}). Consequently, both recommendation objectives remain poorly learned at certain decoding steps, which derails multi-token generation and degrades joint recommendation performance. To address this issue, the primary step is to quantify the distillation difficulty at each token so as to identify which teacher is harder to distill. As evidenced by our empirical observations in \cref{fig:motivation,tab:student-teacher-kl}, a larger discrepancy between the initial student $\pi_{\theta_0}$ and a teacher $\pi_j$, as reflected by a larger KL divergence, consistently makes $\pi_j$ more challenging to distill in MOPD. Motivated by this observation, we measure the distillation difficulty of each teacher $\pi_j$ ($j \in \{c, v\}$) at each token $t$ via the expected token-wise KL divergence from $\pi_{\theta_0}$ to $\pi_j$:
\begin{equation} \label{eq:initial-kl}
    d_{j, t} \coloneqq \mathbb{E}_{\substack{x \sim \mathcal{X} \\ y \sim \pi_{\theta_0}(\cdot | x)}} \left[ \mathbb{D}_{\mathrm{KL}} \left( \pi_{\theta_0}(\cdot | x, y_{<t}) \,\|\, \pi_j(\cdot | x, y_{<t}) \right) \right].
\end{equation}

\noindentparnoline{KL-Weighted On-Policy Distillation (KL-MOPD).}
To learn from different teachers in a more balanced manner at each token, a natural idea is to weight the teachers according to their distillation difficulties, \ie assigning a larger weight to the harder-to-distill teacher. Building on this idea, we propose \textbf{\mymethod}. Specifically, utilizing the distillation difficulty $d_{j, t}$ defined via the student-teacher token-wise KL divergence in \eqnref{eq:initial-kl}, we define the normalized weight for each teacher $j \in \{c, v\}$ at token $t$ as:\footnote{In practical distillation scenarios, we can assume $d_{j, t} > 0$ for $j \in \{c, v\}$.}
\begin{equation} \label{eq:weight}
    \alpha_{j, t} \coloneqq \frac{d_{j, t}}{d_{c, t} + d_{v, t}}, \quad j \in \{c, v\}.
\end{equation}
Replacing the equal weights ($1/2$) in \eqnref{eq:vanilla-mopd-loss} with our token-wise weights $\alpha_{j, t}$ in \eqnref{eq:weight}, \mymethod optimizes the following objective:
\begin{equation} \label{eq:kl-mopd-loss}
\begin{aligned}
    &\mathcal{L}_{\text{\mymethod}}(\theta) \coloneqq\\
    & \mathbb{E}_{\substack{x \sim \mathcal{X} \\ y \sim \pi_\theta(\cdot | x)}} \left[ \sum_{t=1}^{T} \sum_{j \in \{c, v\}} \alpha_{j, t} \mathbb{D}_{\mathrm{KL}} \left( \pi_\theta(\cdot | x, y_{<t}) \,\|\, \pi_j(\cdot | x, y_{<t}) \right) \right].
\end{aligned}
\end{equation}
Compared to vanilla MOPD, \mymethod assigns stronger supervision to the harder-to-distill teacher at every decoding step. This design (i) facilitates better fitting of the objective that was previously poorly learned at each token, mitigating \emph{token-level imbalance} (\cf \cref{fig:motivation}); and (ii) balances the learning progress of each individual objective across different tokens, alleviating \emph{cross-token inconsistency} (\cf \cref{tab:student-teacher-kl}). Ultimately, this makes the student less likely to suffer from item-level decoding failures for a target objective caused by under-distilling its teacher at certain tokens, thereby improving joint recommendation performance (\cf \cref{tab:main-results}).

\begin{figure*}[t]
  \centering
  \begin{minipage}[t]{\dimexpr(\textwidth-\columnsep)/2\relax}
    \vspace{0pt}
    \centering
    \includegraphics[width=\linewidth]{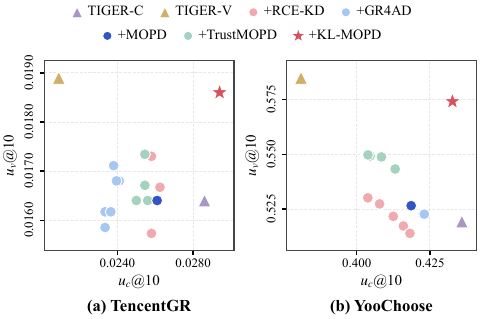}
    \caption{
      Pareto frontiers of click and value objectives for
      single-objective teachers, representative baselines,
      and \mymethod on TencentGR and YooChoose with TIGER backbone.
    }
    \Description{Pareto frontiers of click and value objectives
      at top-10 recommendation.}
    \label{fig:pareto_click_value}
  \end{minipage}\hfill%
  \begin{minipage}[t]{\dimexpr(\textwidth-\columnsep)/2\relax}
    \vspace{0pt}
    \centering
    \includegraphics[width=\linewidth]{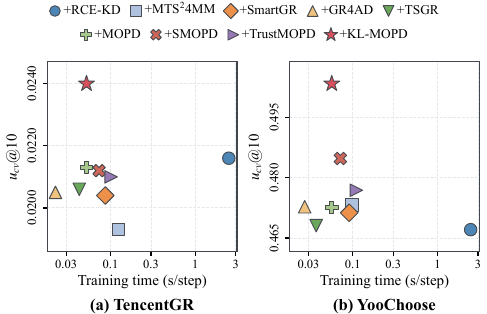}
    \caption{
      Pareto frontiers of the additive joint objective $u_{cv}$
      and training time (s/step) for all baselines and \mymethod
      on TencentGR and YooChoose with TIGER backbone.
    }
    \Description{Pareto frontiers of the additive joint objective
      and training time at top-10 recommendation.}
    \label{fig:pareto_utility_time}
  \end{minipage}
\end{figure*}

\subsection{Analysis of KL-MOPD} \label{sec:methodology:analysis}

\noindentparnolinenospace{Theoretical Guarantees.}
Despite its simplicity, our proposed \mymethod is backed by rigorous theoretical guarantees for joint recommendation objective optimization, as stated in the following:

\begin{tcolorbox}[kessokustyle=bocchipink,breakable=true]
\begin{restatable}{theorem}{RestateTheoremKLMOPDJointObjective} \label{theorem:kl-mopd-joint-objective}
    \upshape
    {(\textit{Joint Objective Improvement of \mymethod}).}
    Let $\pi_{cv, \mathrm{KL}}$ and $\pi_{cv, \mathrm{U}}$ denote the optimal policies of \mymethod in \eqnref{eq:kl-mopd-loss} and vanilla MOPD in \eqnref{eq:vanilla-mopd-loss}, respectively. We can establish a lower bound on the additive joint recommendation objective improvement of \mymethod over vanilla MOPD as:
    \begin{equation} \label{eq:main-bound}
        \mathbb{E}_{\mathcal{X}, \pi_{cv, \mathrm{KL}}}[u_c + u_v] - \mathbb{E}_{\mathcal{X}, \pi_{cv, \mathrm{U}}}[u_c + u_v] \geq \beta S_{\mathrm{KL}} - \frac{D_T L^3}{\beta^2},
    \end{equation}
    where $S_{\mathrm{KL}}$ is a non-negative squared leading term defined as
    \begin{equation} \label{eq:square-gain}
        S_{\mathrm{KL}} \coloneqq \sum_{t=1}^{T} \frac{(d_{c, t} - d_{v, t})^2}{d_{c, t} + d_{v, t}} \geq 0,
    \end{equation}
    while $D_T = \frac{19T}{24} + \frac{1}{2}\sum_{t=1}^{T}\sqrt{t-1}$ is a constant depending only on $T$, and $L = \max\{L_c, L_v\}$. Therefore, when $S_{\mathrm{KL}} > D_T L^3 / \beta^3$, \mymethod is guaranteed to strictly outperform vanilla MOPD, \ie $\mathbb{E}_{\mathcal{X}, \pi_{cv, \mathrm{KL}}}[u_c + u_v] > \mathbb{E}_{\mathcal{X}, \pi_{cv, \mathrm{U}}}[u_c + u_v]$.
\end{restatable}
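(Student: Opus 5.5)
The plan is to compare the two optimal policies through a perturbative expansion around the reference policy $\pi_{\theta_0}$ in the small parameter $L/\beta$. The leading (quadratic) term should give exactly $\beta S_{\mathrm{KL}}$, and every remaining contribution should be cubic in $L/\beta$, hence of order $\beta\cdot L^3/\beta^3 = L^3/\beta^2$.

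\textbf{Step 1: characterize the optimal policies.} For any token-wise weights $w_{j,t}$ with $w_{c,t}+w_{v,t}=1$, I would write the per-prefix weighted loss as
\begin{equation*}
\sum_j w_{j,t}\,\mathbb{D}_{\mathrm{KL}}\bigl(\pi(\cdot|x,y_{<t})\,\|\,\pi_j(\cdot|x,y_{<t})\bigr) = \mathbb{D}_{\mathrm{KL}}\bigl(\pi(\cdot|x,y_{<t})\,\|\,\bar\pi_{w,t}\bigr) - \log Z_{w,t}(x,y_{<t}),
\end{equation*}
where $\bar\pi_{w,t}\propto \pi_c^{w_{c,t}}\pi_v^{w_{v,t}}$ is the normalized token-wise geometric mixture. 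Because the normalizer $-\log Z$ enters the on-policy objective as a prefix-dependent reward, the exact minimizer of \eqnref{eq:kl-mopd-loss} is obtained by backward induction over $t$. I would write it as $\pi_{w}(y_t|x,y_{<t}) \propto \bar\pi_{w,t}(y_t)\, e^{-\tilde V_{t+1}(x,y_{\le t})}$, where $\tilde V$ are soft values built from the $\log Z$'s. Since each $\log Z_{w,t}$ is at most of order $\max_j \mathbb{D}_{\mathrm{KL}}$, which is $O(L^2/\beta^2)$, this correction to the log-policy is second order.

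\textbf{Step 2: expand everything in the log-ratios.} Let $r_{j,t}=\log(\pi_j/\pi_{\theta_0})$ at step $t$. Using \eqnref{eq:optimal-teacher-policy}, $u_j/\beta$ equals $\sum_t r_{j,t}$ up to an $x$-dependent constant, and each $r_{j,t}$ is $O(L/\beta)$ via soft-value bounds. Then $\log(\pi_w/\pi_{\theta_0}) = \sum_t (w_{c,t}r_{c,t}+w_{v,t}r_{v,t}) + \text{centering} + O(L^2/\beta^2)$. For a policy $\pi=\pi_{\theta_0}e^{s}$ with small $s$, I would use
\begin{equation*}
\mathbb{E}_\pi[u]=\mathbb{E}_{\pi_{\theta_0}}[u]+\mathrm{Cov}_{\pi_{\theta_0}}(s,u)+O\bigl(\|u\|_\infty \|s\|^2\bigr).
\end{equation*}
With $u=u_c+u_v=\beta\sum_t(r_{c,t}+r_{v,t})$, martingale orthogonality of token-wise centered terms across steps leaves
\begin{equation*}
\mathbb{E}_{\pi_w}[u_c+u_v]\approx \mathrm{const}+\beta\sum_t\Bigl[w_{c,t}\bigl(V_{c,t}+C_t\bigr)+w_{v,t}\bigl(V_{v,t}+C_t\bigr)\Bigr],
\end{equation*}
where $V_{j,t}$ are per-step variances and $C_t$ is the covariance. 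To second order, $V_{j,t}= 2d_{j,t}$, since $\mathbb{D}_{\mathrm{KL}}(\pi_{\theta_0}\|\pi_j)\approx\tfrac12\mathrm{Var}(r_j)$.

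\textbf{Step 3: take the difference.} Subtracting the uniform case $w=\tfrac12$ from the case $w=\alpha$ cancels the $C_t$ terms because $\sum_j w_{j,t}=1$. The leading difference is then
\begin{equation*}
\beta\sum_t(\alpha_{c,t}-\tfrac12)\,2(d_{c,t}-d_{v,t}) = \beta\sum_t\frac{(d_{c,t}-d_{v,t})^2}{d_{c,t}+d_{v,t}}=\beta S_{\mathrm{KL}}.
\end{equation*}
This is exactly the squared leading term in \eqnref{eq:square-gain}. The final claim, strict improvement when $S_{\mathrm{KL}}>D_T L^3/\beta^3$, then follows immediately from \eqnref{eq:main-bound}.

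\textbf{Step 4: control the remainders (the main obstacle).} All remainders must be bounded by $D_T L^3/\beta^2$ with explicit constants. They come from three sources: (a) the third-order Taylor terms in $\mathrm{KL}\approx\tfrac12\mathrm{Var}$ and in $e^{s}$, which are per-token and should produce the $\tfrac{19}{24}T$ part through cumulant bounds such as $|e^{z}-1-z-z^2/2|\le |z|^3 e^{|z|}/6$ with $|z|\le L/\beta$; (b) the $\log Z$ / soft-value corrections from Step 1; and (c) the fact that $d_{j,t}$ is measured under $\pi_{\theta_0}$-prefixes while the student's prefixes at step $t$ are drawn from a shifted distribution. For (c), I would bound the prefix shift via Pinsker applied to the cumulative KL of the first $t-1$ steps, which is $O((t-1)L^2/\beta^2)$. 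This gives a total-variation error of order $\sqrt{t-1}\,L/\beta$, multiplied by the $O(L^2/\beta^2)$ per-token quantity, and is the natural source of the $\tfrac12\sum_t\sqrt{t-1}$ term. Matching the exact constants $19/24$ and $1/2$ will require careful bookkeeping, but the order $L^3/\beta^2$ should follow from this structure.
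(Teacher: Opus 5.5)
Your leading-order computation matches the paper's. The first-order tilt gives a per-token covariance with $u_c+u_v$, the cross term cancels because $\alpha_{c,t}+\alpha_{v,t}=1$, and $V_{j,t}\approx 2d_{j,t}$ (the paper's \cref{lemma:kl-variance-relation}, $d_{j,t}\approx v_{j,t}/(2\beta^2)$) turns the leading difference into $\beta S_{\mathrm{KL}}$. Pinsker on the cumulative prefix KL is also where the paper's $\tfrac12\sum_t\sqrt{t-1}$ comes from. The gap is that you never produce a remainder argument that reaches the explicit constant $D_T$: you defer the bookkeeping, and two of your choices would stop it from closing at that constant.

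First, the soft-value term (b). You are right that the exact minimizer of \eqnref{eq:kl-mopd-loss} comes from backward induction and is not the token-wise geometric mixture. Already for uniform weights, the minimizer of \eqnref{eq:vanilla-mopd-loss} is $\pi_{\theta_0}e^{(u_c+u_v)/(2\beta)}/Z$ (cf.\ \eqnref{eq:mopd-joint-equivalence}), whose conditionals are in general not normalized geometric means. The paper's proof, however, takes $\pi_{cv,\mathrm{KL}}$ and $\pi_{cv,\mathrm{U}}$ to be the per-prefix minimizers in \eqnref{eq:conditional-target}, i.e.\ the per-token fixed point with the sampling distribution held fixed. $D_T$ is derived for these, and for them (b) is identically zero.

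If you keep (b), two things go wrong. Your justification that each $\log Z_{w,t}=O(L^2/\beta^2)$ only gives $\tilde V_{t+1}=O((T-t)L^2/\beta^2)$, hence an $O(T^2L^3/\beta^2)$ error. A $T$-uniform bound needs the chain rule: $\tilde V_{t+1}(x,y_{\le t})$ is at most the expected future weighted token-wise KL under $\pi_{\theta_0}$. That is at most the sum over $j$ of the KL between the suffix distributions of $\pi_{\theta_0}$ and $\pi_j$ given $(x,y_{\le t})$, hence at most $L^2/(4\beta^2)$ by \cref{lemma:exponential-reweighting} with $g=u_j/\beta$. Even then, $\tilde V_{t+1}$ enters the per-token tilt on the \mymethod side. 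It adds a covariance term bounded only by $\tfrac{L^3}{8\beta^2}$ per token, plus higher-order terms through a larger $\operatorname{osc}(g)$. The paper's $D_T$ is exactly the sum of its own per-token estimates, so there is no room for this. (The uniform exact minimizer happens to satisfy the same per-token bounds as the geometric mixture, so only the \mymethod side is affected.) To prove the statement with $D_T$ you must adopt the per-prefix convention. Your stricter reading gives a bound of the same shape with a larger constant.

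Second, the expansion in Step 2. A sequence-level bound $\mathbb{E}_\pi[u]=\mathbb{E}_{\pi_{\theta_0}}[u]+\operatorname{Cov}(s,u)+O(\|u\|_\infty\|s\|^2)$ does not yield the $T$-dependence of $D_T$. With token-varying weights, $s=\sum_t\sum_j w_{j,t}r_{j,t}$ does not telescope. Abel summation bounds $\operatorname{osc}(s)$ only by $(1+2\sum_{t<T}|\alpha_{c,t+1}-\alpha_{c,t}|)L/\beta$, which is of order $TL/\beta$ for oscillating weight profiles, so the remainder bound becomes $O(T^2L^3/\beta^2)$. Moreover, $r_{j,t}$ is not an exact martingale difference under $\pi_{\theta_0}$ (its conditional mean is $-\mathbb{D}_{\mathrm{KL}}$). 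A purely sequence-level expansion also never produces the prefix-shift error (c), so Steps 2 and 4(c) belong to different arguments.

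The device you need is the paper's per-token route:
\begin{itemize}
\item Expand in the increments $\Delta U_j(x,y_{\le t})$ of \eqnref{eq:objective-increment}, which are exact martingale differences, and telescope $u_c+u_v$ as in \eqnref{eq:telescoping-joint-diff}.
\item Apply \cref{lemma:exponential-reweighting} one token at a time, with $\operatorname{osc}(g)\le L/\beta$ and remainder $R\in[0,L^2/(8\beta^2)]$. This costs $\tfrac{L^3}{4\beta^2}+\tfrac{L^3}{16\beta^2}=\tfrac{5L^3}{16\beta^2}$ per token and policy.
\item Move $C(x,y_{<t})$ from $\pi_{cv}$- to $\pi_{\theta_0}$-prefixes at cost $\operatorname{osc}(C)\cdot\mathrm{TV}\le\tfrac{L^2}{\beta}\cdot\tfrac{L\sqrt{t-1}}{4\beta}$. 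Doubling over the two policies gives $\tfrac{5T}{8}+\tfrac12\sum_t\sqrt{t-1}$.
\item Replace $v_{j,t}$ by $2\beta^2d_{j,t}$ at cost $\tfrac{L^3}{6\beta^2}$ per token, i.e.\ $\tfrac{T}{6}$ in total.
\end{itemize}
Together these give $\tfrac{19T}{24}+\tfrac12\sum_t\sqrt{t-1}=D_T$.
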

\end{tcolorbox}

\noindent
The proof is provided in \cref{app:proof-theorem:kl-mopd-joint-objective}. \cref{theorem:kl-mopd-joint-objective} establishes a rigorous lower bound on the performance gain of \mymethod over vanilla MOPD in optimizing the additive joint objective \eqnref{eq:additive-joint-objective}. In practice, the condition $S_{\mathrm{KL}} > D_T L^3 / \beta^3$ is readily satisfied,\footnote{For instance, on TencentGR~\citep{pan2026tencent} ($L = 1, T = 5$), we have $D_T \approx 7.03$, $S_{\mathrm{KL}} \approx 0.05$, and $\beta \approx 6.48$, which clearly satisfies $S_{\mathrm{KL}} > D_T L^3 / \beta^3$. Here, $\beta$ is estimated by Monte Carlo method, \ie sampling item pairs $(y_1, y_2)$ and utilizing the optimality condition $u_j(x, y_1) - u_j(x, y_2) = \beta \log \frac{\pi_j(y_1 | x)}{\pi_{\theta_0}(y_1 | x)} - \beta \log \frac{\pi_j(y_2 | x)}{\pi_{\theta_0}(y_2 | x)}$ from \eqnref{eq:optimal-teacher-policy}.} guaranteeing a strictly positive joint objective gain and thus theoretically validating the effectiveness of \mymethod.

Furthermore, looking closer at the bound \eqnref{eq:main-bound}, the non-negative leading term $S_{\mathrm{KL}}$ in \eqnref{eq:square-gain} is positively correlated with $(d_{c, t} - d_{v, t})^2$, which directly measures the discrepancy between the distillation difficulties of the two teachers at each token $t$. This theoretically echoes our motivation in \cref{sec:preliminaries:challenges}: there exists a token-level imbalance in distillation difficulties between the click and value teachers, and this is precisely what enables \mymethod to achieve superior multi-objective distillation over vanilla MOPD in GR.

\noindentparnoline{Practical Advantages.}
Beyond theoretical guarantees, \mymethod enjoys two appealing advantages for practical deployment:
(i) \emph{Plug-and-Play Design.} Rather than introducing auxiliary networks or complex gradient manipulations, \mymethod simply replaces the uniform weights in \eqnref{eq:vanilla-mopd-loss} with discrepancy-aware token-wise weights $\alpha_{j, t}$ in \eqnref{eq:kl-mopd-loss}, allowing it to be seamlessly integrated into any existing MOPD pipeline and GR backbone.
(ii) \emph{Computational Efficiency.} Since the token-wise discrepancies $d_{j, t}$ and weights $\alpha_{j, t}$ depend solely on the fixed initial student $\pi_{\theta_0}$ and teachers $\pi_j$, they only need to be pre-computed once prior to training. During on-policy distillation, \mymethod requires no additional model forward passes or memory overhead compared to vanilla MOPD, incurring negligible computational cost in practice.


\section{Experiments} \label{sec:experiments}

\subsection{Experimental Setup} \label{sec:experiments:settings}

\noindentparnolinenospace{Datasets.}
We adopt three public datasets, namely TencentGR~\citep{pan2026tencent}, YooChoose~\citep{ben2015recsys}, and Taobao~\citep{zhu2018learning}, alongside a billion-scale industrial advertising dataset (\ie Industry) (\cf \cref{tab:dataset-statistics} for statistics). Each dataset provides two types of feedback for the click ($u_c$) and value ($u_v$) objectives: CTR and CVR on TencentGR, CTR and GMV on YooChoose, CTR and CVR on Taobao, and CTR and eCPM~\citep{xue2026generative} on Industry. For each public dataset, interactions are sorted chronologically and split into 70\% for pre-training the initial student $\pi_{\theta_0}$ on all feedback, 20\% for fine-tuning the specialized teachers ($\pi_c, \pi_v$ initialized from $\pi_{\theta_0}$) and distilling the student $\pi_\theta$, and 10\% for evaluation. For Industry, the fine-tuning and evaluation sets are the last two days of the dataset, each containing more than 11M requests.

\noindentparnoline{Backbones.}
We adopt two representative GR backbones, including TIGER~\citep{rajput2023recommender} and REG4Rec~\citep{xing2026reg4rec}. For each backbone, we closely follow the original implementation and further adapt it to our datasets. Specifically, to construct semantic tokens for each item, following common practice~\citep{si2024generative}, we first learn continuous item embeddings using a 2-layer MLP-based two-tower collaborative filtering model, then quantize them into discrete semantic tokens via RQ-kmeans~\citep{luo2025qarm}, with codebook configurations ($T \times \text{codebook size}$) of $5 \times 64$ on TencentGR, $5 \times 128$ on YooChoose, $8 \times 64$ on Taobao, and $8 \times 310$ on Industry. For the generative recommender, we use a standard encoder-decoder Transformer with 2 encoder layers, 2 decoder layers, and hidden dimension 64 for public datasets, and 5 decoder layers and hidden dimension 1024 for Industry.

\noindentparnoline{Compared Methods.}
We compare \mymethod against the initial student, both single-objective teachers (\ie Backbone-C and Backbone-V), and eight state-of-the-art baselines (mostly from 2026) across four diverse categories:
(i) \emph{traditional recommendation distillation}: RCE-KD~\citep{zhu2026rejuvenating} and MTS$^2$4MM~\citep{ijcai2026p348};
(ii) \emph{GR-specific distillation}: SmartGR~\citep{zhang2026smartgr};
(iii) \emph{value-aware GR}: GR4AD~\citep{xue2026generative} and TSGR~\citep{zhan2026tsgr}; and
(iv) \emph{multi-objective on-policy distillation}: vanilla MOPD~\citep{ma2026mopd}, SMOPD~\citep{wang2026smopd}, and TrustMOPD~\citep{sun2026distill}.
For all the baseline methods, we closely follow the original implementations and further tune hyperparameters for optimal performance on our datasets.

\noindentparnoline{Training Details.}
Following standard practice in prior works~\citep{rajput2023recommender,yang2026bear}, we apply a sliding window over user interaction sequences to construct training and evaluation instances, with a maximum history length of 100 on public datasets and 128 on Industry. We utilize the AdamW optimizer and train all models by update steps until convergence across all stages. Specifically, the initial student $\pi_{\theta_0}$ is pre-trained with a batch size of 1,024 and a learning rate of $10^{-3}$. During both teacher fine-tuning and student distillation, we freeze the encoder and only update the decoder with a batch size of 256, searching the learning rate in $\{10^{-4}, 10^{-3}\}$ and weight decay in $\{10^{-2}, 10^{-3}\}$. The teachers are fine-tuned via SFT for binary objectives (\eg CTR and CVR) and DPO~\citep{rafailov2023direct} for continuous-valued objectives (\eg GMV and eCPM), while all OPD methods sample 4 trajectories per request during distillation.

\noindentparnoline{Evaluation Metrics.}
We evaluate both individual objectives $u_j@K$ ($j \in \{c, v\}$, normalized to $[0, 1]$) and the additive joint objective $u_{cv}@K = (u_c@K + u_v@K) / 2$. Specifically, for each objective $j \in \{c, v\}$, $u_j@K$ is defined as the Recall@$K$ metric weighted by the corresponding objective $u_j(x, y)$. During inference, we adopt constrained beam search~\citep{rajput2023recommender,yang2026bear} and report $K \in \{10, 20\}$ (beam size 50) on public datasets and $K \in \{100, 200\}$ (beam size 200) on Industry.


\begin{table}[t]
  \centering
  \caption{
    Ablation study on teacher weighting strategies with TIGER. Specifically, "+Item-wise KL-MOPD" uses token-shared weights from student-teacher item-wise KL; "+MSE-MOPD" computes token-wise weights via mean squared error between student and teacher token probabilities; and "+Top-$K$ KL-MOPD" restricts KL calculation to the renormalized top-$K$ tokens of the initial student, where $K=16$.
  }
  \label{tab:ablation}
  \footnotesize
  \setlength{\tabcolsep}{2pt}
  \renewcommand{\arraystretch}{1.02}
  \begin{tabularx}{\columnwidth}{l|YY>{\columncolor[rgb]{0.93,0.96,1.00}}Y|YY>{\columncolor[rgb]{0.93,0.96,1.00}}Y}
    \toprule
    \multicolumn{1}{c|}{\multirow{2}[4]{*}{\textbf{Methods}}} & \multicolumn{3}{c|}{\textbf{TencentGR}} & \multicolumn{3}{c}{\textbf{YooChoose}} \\
    \cmidrule(lr){2-4} \cmidrule(lr){5-7}
    & \textbf{\boldmath$u_c$@10} & \textbf{\boldmath$u_v$@10} & \textbf{\boldmath$u_{cv}$@10} & \textbf{\boldmath$u_c$@10} & \textbf{\boldmath$u_v$@10} & \textbf{\boldmath$u_{cv}$@10} \\
    \midrule
    +MOPD & 0.0261 & 0.0164 & 0.0213 & 0.4186 & 0.5265 & 0.4726 \\
    \cmidrule(lr){1-1} \cmidrule(lr){2-4} \cmidrule(lr){5-7}
    +Item-wise KL-MOPD & 0.0255 & 0.0164 & 0.0209 & 0.4204 & 0.5424 & 0.4814 \\
    +MSE-MOPD & 0.0262 & 0.0164 & 0.0213 & 0.4178 & 0.5525 & 0.4851 \\
    +Top-$K$ KL-MOPD & 0.0244 & 0.0170 & 0.0207 & 0.4180 & 0.5516 & 0.4848 \\
    \cmidrule(lr){1-1} \cmidrule(lr){2-4} \cmidrule(lr){5-7}
    \textbf{+\mymethod} & \textbf{0.0294} & \textbf{0.0186} & \textbf{0.0240} & \textbf{0.4327} & \textbf{0.5741} & \textbf{0.5034} \\
    \bottomrule
  \end{tabularx}
\end{table}


\subsection{Experimental Results} \label{sec:experiments:results}

\noindentparnolinenospace{Overall Performance.}
\cref{tab:main-results} shows the overall performance across three public datasets and two GR backbones. \mymethod consistently outperforms all baselines on the individual objectives $u_c$ and $u_v$ by an average of \emph{+4.91\%}. More importantly, on the joint objective $u_{cv}$, \mymethod achieves the best performance in all settings, surpassing both the strongest baseline and single-objective teachers by \emph{+4.12\%} on average, validating our theoretical analysis in \cref{theorem:kl-mopd-joint-objective}. Furthermore, as shown in \cref{fig:pareto_click_value}, \mymethod establishes a strictly \emph{superior Pareto frontier} between $u_c$ and $u_v$ compared to representative baselines, confirming its effectiveness in balancing dual-teacher distillation process across decoding steps.

\noindentparnoline{Ablation Study.}
To further validate the effectiveness of our proposed token-wise weighting scheme, we conduct an ablation study in \cref{tab:ablation}, including three alternative weighting variants:
(i) \emph{+Item-wise KL-MOPD}, which uses token-shared weights computed from the item-wise KL divergence;
(ii) \emph{+MSE-MOPD}, which computes token-wise weights via mean squared discrepancy between student and teacher token probabilities; and
(iii) \emph{+Top-$K$ KL-MOPD}, which restricts the KL calculation to the renormalized top-16 tokens of $\pi_{\theta_0}$.
As shown, \mymethod consistently outperforms all three variants, demonstrating the rationale of our proposed weighting scheme.

\noindentparnoline{Computational Efficiency.}
\cref{fig:pareto_utility_time} plots the joint performance $u_{cv}$ against training time per step. Since \mymethod pre-computes $\alpha_{j, t}$ only once prior to training, it incurs virtually zero extra overhead over MOPD. Compared to baselines requiring dynamic weight updates (\eg TrustMOPD) or extra forward passes (\eg RCE-KD), \mymethod achieves better accuracy-efficiency trade-off.

\noindentparnoline{Industrial Offline and Online Evaluation.}
Finally, we validate \mymethod in real-world e-commerce advertising scenarios. In offline evaluations on the billion-scale Industry dataset (\cf \cref{tab:industrial-offline}), \mymethod consistently outperforms vanilla MOPD and both teachers in single- and joint-objective metrics (avg. \emph{+3.19\%} on $u_{cv}$). Furthermore, we deploy \mymethod in a 5-day online A/B test on our production advertising platform, allocating 10\% live traffic to the control group and 10\% to our treatment group. As reported in \cref{tab:industrial-online}, \mymethod achieves statistically significant uplifts ($p < 0.05$) of \emph{+1.63\%} in advertising revenue (REV) and \emph{+2.20\%} in GMV, confirming its practical value in large-scale industrial applications.




\begin{table}[t]
  \centering
  \caption{
    Offline results on the large-scale industrial dataset with TIGER. Given the large number of ads, we evaluate performance on top-100 and top-200 recommendations. Due to limited resources, we only compare \mymethod with MOPD. Formatting and improvement definitions follow \cref{tab:main-results}.
  }
  \label{tab:industrial-offline}
  \setlength{\tabcolsep}{3pt}
  \resizebox{\columnwidth}{!}{%
    \begin{tabular}{l|cc>{\columncolor[rgb]{0.93,0.96,1.00}}ccc>{\columncolor[rgb]{0.93,0.96,1.00}}c}
    \toprule
    \multicolumn{1}{c|}{\multirow{2}[4]{*}{\textbf{Methods}}} & \multicolumn{6}{c}{\textbf{Industry}} \\
    \cmidrule(lr){2-7}
    & \textbf{\boldmath$u_{c}$@100} & \textbf{\boldmath$u_{v}$@100} & \textbf{\boldmath$u_{cv}$@100} & \textbf{\boldmath$u_{c}$@200} & \textbf{\boldmath$u_{v}$@200} & \textbf{\boldmath$u_{cv}$@200} \\
    \midrule
    \textbf{TIGER} & 0.3934 & 0.1695 & 0.2815 & 0.4698 & 0.2187 & 0.3443 \\
    TIGER-C & 0.4090 & 0.1757 & 0.2924 & \uline{0.4772} & 0.2238 & 0.3505 \\
    TIGER-V & 0.4088 & 0.1968 & 0.3028 & 0.4494 & 0.2308 & 0.3401 \\
    \cmidrule(lr){1-1} \cmidrule(lr){2-7}
    +MOPD & \uline{0.4309} & \uline{0.1969} & \uline{0.3139} & 0.4729 & \uline{0.2366} & \uline{0.3548} \\
    \cmidrule(lr){1-1} \cmidrule(lr){2-7}
    \textbf{+\mymethod} & \textbf{0.4417} & \textbf{0.2054} & \textbf{0.3236}$^{*}$ & \textbf{0.4860} & \textbf{0.2469} & \textbf{0.3665}$^{*}$ \\
    \textcolor{darkred}{\textbf{Imp.\%}} & \textcolor{darkred}{\textbf{+2.51\%}} & \textcolor{darkred}{\textbf{+4.32\%}} & \textcolor{darkred}{\textbf{+3.07\%}$^{*}$} & \textcolor{darkred}{\textbf{+2.77\%}} & \textcolor{darkred}{\textbf{+4.35\%}} & \textcolor{darkred}{\textbf{+3.30\%}$^{*}$} \\
    \bottomrule
    \end{tabular}
  }
\end{table}

\begin{table}[t]
  \centering
  \caption{
    Online A/B test results of \mymethod on the real-world industrial e-commerce advertising platform, in advertising revenue (REV) and gross merchandise volume (GMV).
  }
  \label{tab:industrial-online}
  \resizebox{\columnwidth}{!}{%
    \begin{tabular}{l|c|c}
    \toprule
    \textbf{Metrics} & \makecell{\textbf{Advertising Revenue}\\\textbf{(REV)}} & \makecell{\textbf{Gross Merchandise Volume}\\\textbf{(GMV)}} \\
    \midrule
    \textcolor{darkred}{\textbf{Imp.\%}} & \textcolor{darkred}{\textbf{+1.63\%}} & \textcolor{darkred}{\textbf{+2.20\%}} \\
    \bottomrule
    \end{tabular}
  }
\end{table}


\section{Related Work} \label{sec:related_work}

\noindentparnolinenospace{Generative Recommendation.}
Generative recommendation (GR) reformulates item retrieval as autoregressive generation over discrete semantic tokens, achieving strong scalability and generalization~\citep{rajput2023recommender,wang2024learnable,zheng2024adapting,si2024generative,deng2025onerec,xing2026reg4rec}. However, recent studies reveal that GR remains challenging to optimize: unlike natural language generation, successfully retrieving a positive item in GR strictly requires making accurate decoding decisions at every token~\citep{yang2026bear,yu2026apao,zhang2026smartgr}. In this work, we explicitly account for this specific requirement via balanced token-wise optimization across all decoding steps.

\noindentparnoline{Multi-Objective Optimization.}
Simultaneously learning multiple heterogeneous objectives is a ubiquitous requirement in machine learning tasks~\citep{huang2024data,weiss2025clicks}. Existing approaches broadly fall into multi-task architecture design~\citep{ma2018entire,ma2018modeling,tang2020progressive}, loss scalarization and Pareto-efficient gradient manipulation~\citep{sener2018multi,lin2019pareto,yu2020gradient,liu2021conflict,xie2021personalized}, and multi-objective preference alignment or score fusion~\citep{pei2019value,gu2020deep,zeng2021enhancing,zhou2024beyond,cao2025xmtf,he2025end,meng2025generative}. Recently, several studies have also incorporated value-aware optimization into GR, such as GR4AD~\citep{xue2026generative}, TSGR~\citep{zhan2026tsgr}, UniVA~\citep{zhang2026unified}, and RGD~\citep{yang2026reward}. Nevertheless, these methods primarily coordinate objectives at the coarse item-wise level, overlooking the multi-token generation nature of GR. In contrast, our \mymethod explicitly coordinates multiple objectives at the fine-grained token level.

\noindentparnoline{Recommendation Distillation.}
Knowledge distillation has been widely used in recommendation to transfer teacher capabilities into student via item rankings~\citep{tang2018ranking,liu2020developing,kang2020rrd,yang2022cross,kang2023distillation,zhu2026rejuvenating}, hidden representations~\citep{kang2021topology,zhu2025exploring,ijcai2026p348}, and prediction scores~\citep{xv2022lightweight,tang2024multi,yi2024dmmp}, which mostly operate at item-level. Recently, studies have begun adapting distillation to GR~\citep{zhang2026smartgr}, notably by introducing OPD~\citep{team2026onereason} for its advantages of on-policy learning and token-wise supervision. Our \mymethod inherits these merits while further resolving the token-level imbalance and cross-token inconsistency across multiple teachers.

\noindentparnoline{On-Policy Distillation.}
Recently, OPD has emerged as a mainstream post-training paradigm~\citep{agarwal2024policy,gu2024minillm}. Beyond refinements to single-teacher OPD~\citep{li2026filter,wang2026dr,wang2026metaopd}, MOPD~\citep{ma2026mopd} has attracted widespread interest for its success in merging multiple specialized experts. To balance multiple objectives, recent studies have explored incorporating reinforcement learning (\eg SMOPD~\citep{wang2026smopd}) or dynamic weighting (\eg Open-MOPD~\citep{gao2026open} and TrustMOPD~\citep{sun2026distill}). However, these methods are mostly intuitive and not tailored to recommendation. In contrast, our \mymethod captures the token-level imbalance and cross-token inconsistency in GR and allocates token-wise weights via initial student-teacher discrepancy, optimizing the joint objective with rigorous theoretical guarantees over vanilla MOPD.


\section{Conclusion and Future Directions} \label{sec:conclusion_and_future_directions}

This work explores multi-objective on-policy distillation (MOPD) for generative recommendation to coordinate competing objectives through fine-grained token-wise supervision. Through empirical analysis, we uncover that vanilla MOPD suffers from token-level imbalance and cross-token inconsistency, which hinder positive item retrieval for both objectives. To address this challenge, we propose \mymethod, a plug-and-play and computationally efficient method that allocates token-wise teacher weights proportional to the initial student-teacher KL divergences. Notably, \mymethod is backed by rigorous theoretical guarantees, strictly outperforming vanilla MOPD in the additive joint objective under mild conditions, and its effectiveness is extensively validated across public benchmarks, a billion-scale industrial dataset, and production online A/B tests. Looking forward, promising directions include: (i) extending our theoretical framework to scenarios with three or more teachers, and (ii) generalizing \mymethod to general language tasks.






\bibliographystyle{ACM-Reference-Format}
\bibliography{references}




\newpage    
\appendix


\section{Proof of \cref{theorem:kl-mopd-joint-objective}} \label{app:proof-theorem:kl-mopd-joint-objective}

In this section, we provide the complete proof of \cref{theorem:kl-mopd-joint-objective}. We first introduce some useful notations and preliminary results in \cref{app:proof-theorem:kl-mopd-joint-objective:preliminaries}, and then present the main proof in \cref{app:proof-theorem:kl-mopd-joint-objective:main-proof}.

\subsection{Preliminary Results} \label{app:proof-theorem:kl-mopd-joint-objective:preliminaries}

\noindentparnolinenospace{Notation Simplification.}
For notation simplicity, for any policy $\pi$ and function $f$, we denote the expectation over user requests and policy rollouts as $\mathbb{E}_{\mathcal{X}, \pi}[f] \coloneqq \mathbb{E}_{x \sim \mathcal{X}, \, y \sim \pi(\cdot | x)}[f(x, y)]$. We further denote $L \coloneqq \max\{L_c, L_v\}$, and $\varnothing$ as the empty prefix.

\noindentparnoline{Optimal Students with MOPD.}
From \eqnref{eq:optimal-teacher-policy}, the optimal teacher policy for objective $j \in \{c, v\}$ takes the explicit form $\pi_j(y | x) = \frac{1}{Z_j(x)} \pi_{\theta_0}(y | x) \exp\left( u_j(x, y) / \beta \right)$ with the partition function $Z_j(x) = \sum_{y'} \pi_{\theta_0}(y' | x) \exp\left( u_j(x, y') / \beta \right)$. Then, given any token-wise weights $\alpha_{c, t}, \alpha_{v, t} \in [0, 1]$ with $\alpha_{c, t} + \alpha_{v, t} = 1$, minimizing the weighted token-wise KL divergence $\sum_{j \in \{c, v\}} \alpha_{j, t} \mathbb{D}_{\mathrm{KL}} \left( \pi_\theta(\cdot | x, y_{<t}) \,\|\, \pi_j(\cdot | x, y_{<t}) \right)$ at each prefix $(x, y_{<t})$ yields the optimal token-wise target policy $\pi_{cv}$, which is the normalized geometric mixture of the two teachers:
\begin{equation} \label{eq:conditional-target}
    \pi_{cv}(y_t | x, y_{<t}) = \frac{\left( \pi_c(y_t | x, y_{<t}) \right)^{\alpha_{c, t}} \left( \pi_v(y_t | x, y_{<t}) \right)^{\alpha_{v, t}}}{Z_{cv}(x, y_{<t})},
\end{equation}
where $Z_{cv}(x, y_{<t})$ is the partition function. The corresponding item-wise policy is defined as $\pi_{cv}(y | x) = \prod_{t=1}^{T} \pi_{cv}(y_t | x, y_{<t})$. Specifically, $\pi_{cv, \mathrm{KL}}$ and $\pi_{cv, \mathrm{U}}$ denote the optimal policies under the \mymethod's weights $\alpha_{j, t} = \frac{d_{j, t}}{d_{c, t} + d_{v, t}}$ in \eqnref{eq:weight} and the vanilla MOPD's uniform weights $\alpha_{j, t} = 1/2$ in \eqnref{eq:vanilla-mopd-loss}, respectively.

\noindentparnoline{Conditional Expected Objective, Increment, and Variance.}
For any request $x$ and prefix $y_{<t}$, we define the \emph{conditional expected objective} as $U_j(x, y_{<t}) \coloneqq \mathbb{E}_{y' \sim \pi_{\theta_0}(\cdot | x)} \left[ u_j(x, y') \mid y'_{<t} = y_{<t} \right]$ under the initial student $\pi_{\theta_0}$, where $U_j(x, \varnothing) = \mathbb{E}_{y \sim \pi_{\theta_0}(\cdot | x)}[u_j(x, y)]$ for the empty prefix $\varnothing$ and $U_j(x, y_{\leq T}) = u_j(x, y)$. We further define the token-wise \emph{conditional expected objective increment} as the change in conditional expected objective $U_j(x, \cdot)$ after observing token $y_t$:
\begin{equation} \label{eq:objective-increment}
    \Delta U_j(x, y_{\leq t}) \coloneqq U_j(x, y_{\leq t}) - U_j(x, y_{<t}).
\end{equation}
By the law of total expectation, the expected increment is zero, \ie $\mathbb{E}_{y_t \sim \pi_{\theta_0}(\cdot | x, y_{<t})} \left[ \Delta U_j(x, y_{\leq t}) \right] = 0$. Consequently, the expected conditional variance of this increment at step $t$ is given by:
\begin{equation} \label{eq:increment-variance}
\begin{aligned}
    v_{j, t} & \coloneqq \mathbb{E}_{\substack{x \sim \mathcal{X} \\ y_{<t} \sim \pi_{\theta_0}(\cdot | x)}} \left[ \mathbb{V}_{y_t \sim \pi_{\theta_0}(\cdot | x, y_{<t})} \left[ \Delta U_j(x, y_{\leq t}) \right] \right] \\
    & = \mathbb{E}_{\substack{x \sim \mathcal{X} \\ y \sim \pi_{\theta_0}(\cdot | x)}} \left[ \left( \Delta U_j(x, y_{\leq t}) \right)^2 \right].
\end{aligned}
\end{equation}

\noindentparnoline{Variance and Covariance Bounds.}
For any real-valued function $f(z)$ on a finite set, we define its oscillation as $\operatorname{osc}(f) \coloneqq \max_z f(z) - \min_z f(z)$. Under any distribution $p$, it satisfies the standard bounds:
\begin{equation} \label{eq:osc-bounds}
    \mathbb{V}_p[f] \leq \frac{\operatorname{osc}(f)^2}{4}, \quad \left| \operatorname{Cov}_p(f, g) \right| \leq \frac{\operatorname{osc}(f)\operatorname{osc}(g)}{4}.
\end{equation}
Based on these bounds, we can establish the following lemma:

\begin{tcolorbox}[kessokustyle=nijikayellow]
\begin{restatable}{lemma}{RestateLemmaExpReweighting} \label{lemma:exponential-reweighting}
    \upshape
    {(\textit{Exponential Reweighting Bounds}).}
    For any distribution $p(z)$ on a finite set $\mathcal{Z}$ and real-valued functions $f(z), g(z)$, define the exponentially reweighted distribution $q(z) = \frac{p(z) e^{g(z)}}{\mathbb{E}_p[e^g]}$. Then, we have the following bounds:
    \begin{equation} \label{eq:exp-reweighting-bounds-1}
        \left| \mathbb{E}_q[f] - \mathbb{E}_p[f] - \operatorname{Cov}_p(f, g) \right| \leq \frac{\operatorname{osc}(f)\operatorname{osc}(g)^2}{8},
    \end{equation}
    and
    \begin{equation} \label{eq:exp-reweighting-bounds-2}
        \mathbb{D}_{\mathrm{KL}}(p \,\|\, q) \leq \frac{\operatorname{osc}(g)^2}{8}.
    \end{equation}
\end{restatable}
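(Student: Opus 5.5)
The plan is to reduce both bounds to the one-parameter family obtained by tilting $p$ along $g$. For $s\in[0,1]$ set $q_s(z) = p(z)e^{s g(z)}/\mathbb{E}_p[e^{sg}]$, so that $q_0=p$ and $q_1=q$. Write $\psi(s)=\log\mathbb{E}_p[e^{sg}]$ for the log-partition function and $\phi(s)=\mathbb{E}_{q_s}[f]$. The standard exponential-family identities then give
\begin{equation*}
    \phi'(s)=\operatorname{Cov}_{q_s}(f,g), \qquad \psi'(s)=\mathbb{E}_{q_s}[g], \qquad \psi''(s)=\mathbb{V}_{q_s}[g].
\end{equation*}
Both claims become statements about these derivatives. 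The key structural fact is that the oscillation of a function does not depend on the distribution, so the bounds \eqnref{eq:osc-bounds} hold uniformly under every $q_s$.

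\emph{Bound \eqnref{eq:exp-reweighting-bounds-2}.} I would compute directly:
\begin{equation*}
    \mathbb{D}_{\mathrm{KL}}(p\,\|\,q)=\mathbb{E}_p[\log(p/q)]=\psi(1)-\psi(0)-\psi'(0).
\end{equation*}
This is the second-order Taylor remainder of $\psi$, namely $\int_0^1(1-s)\,\psi''(s)\,ds$. Using $\psi''(s)=\mathbb{V}_{q_s}[g]\le \operatorname{osc}(g)^2/4$ from \eqnref{eq:osc-bounds}, and $\int_0^1(1-s)\,ds=1/2$, we get $\operatorname{osc}(g)^2/8$. This part is routine; it is essentially Hoeffding's lemma.

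\emph{Bound \eqnref{eq:exp-reweighting-bounds-1}.} Here I need $\phi(1)-\phi(0)-\phi'(0)=\int_0^1(1-s)\,\phi''(s)\,ds$, and the difficulty is to bound $\phi''(s)$ by $\operatorname{osc}(f)\operatorname{osc}(g)^2/4$. Differentiating once more gives the third mixed cumulant
\begin{equation*}
    \phi''(s)=\mathbb{E}_{q_s}\bigl[(f-\bar f)(g-\bar g)^2\bigr],
\end{equation*}
where the bars denote means under $q_s$. Equivalently, $\phi''(s)=\operatorname{Cov}_{q_s}\bigl(f,(g-\bar g)^2\bigr)$. A naive estimate via \eqnref{eq:osc-bounds} gives $\operatorname{osc}(f)\operatorname{osc}((g-\bar g)^2)/4$. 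Since $\bar g$ lies in $[\min g,\max g]$, the function $(g-\bar g)^2$ takes values in $[0,\operatorname{osc}(g)^2]$, so its oscillation is at most $\operatorname{osc}(g)^2$. This yields $|\phi''(s)|\le \operatorname{osc}(f)\operatorname{osc}(g)^2/4$, and integrating against $(1-s)$ gives exactly $\operatorname{osc}(f)\operatorname{osc}(g)^2/8$.

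I expect this oscillation estimate for the squared centered function to be the main point to verify. It is the step where the sharp constant matters, so I would double-check that the covariance bound in \eqnref{eq:osc-bounds} is applied with $(g-\bar g)^2$ treated as a fixed function for each $s$. I would also confirm that the derivative formulas for $\phi$ hold on a finite set, which is immediate since all sums are finite.
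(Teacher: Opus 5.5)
Your proposal is correct and follows the paper's proof essentially step for step. It uses the same tilted family $q_s$, the same integral-form Taylor remainders for $\mathbb{E}_{q_s}[f]$ and the log-partition function, and the same bound $\operatorname{osc}(f)\operatorname{osc}(g)^2/4$ on the second derivative. The only cosmetic difference is in how you bound the third mixed central moment: the paper uses $\max_z|f(z)-\bar f|\cdot\mathbb{V}_{q_s}[g]\le\operatorname{osc}(f)\operatorname{osc}(g)^2/4$, while you write it as $\operatorname{Cov}_{q_s}\bigl(f,(g-\bar g)^2\bigr)$ and use $\operatorname{osc}\bigl((g-\bar g)^2\bigr)\le\operatorname{osc}(g)^2$, which gives the same constant.
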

\end{tcolorbox}

\begin{proof}[Proof of \cref{lemma:exponential-reweighting}]

For \eqnref{eq:exp-reweighting-bounds-1}, we define the interpolated distribution $p_\lambda(z) = \frac{p(z) e^{\lambda g(z)}}{\mathbb{E}_p[e^{\lambda g}]}$ and $F(\lambda) = \mathbb{E}_{p_\lambda}[f]$ for $\lambda \in [0, 1]$. Direct differentiation yields $F(0) = \mathbb{E}_p[f]$, $F(1) = \mathbb{E}_q[f]$, $F'(0) = \operatorname{Cov}_p(f, g)$, and $F''(\lambda) = \mathbb{E}_{p_\lambda} \left[ (f - \mathbb{E}_{p_\lambda}[f])(g - \mathbb{E}_{p_\lambda}[g])^2 \right]$. By \eqnref{eq:osc-bounds}, we further have $|F''(\lambda)| \leq \operatorname{osc}(f) \mathbb{V}_{p_\lambda}[g] \leq \frac{\operatorname{osc}(f)\operatorname{osc}(g)^2}{4}$. Substituting these results into \eqnref{eq:exp-reweighting-bounds-1} and applying Taylor's formula gives:
\begin{equation}
\begin{aligned}
    & \left| \mathbb{E}_q[f] - \mathbb{E}_p[f] - \operatorname{Cov}_p(f, g) \right| = |F(1) - F(0) - F'(0)|
    \\
    &= \left| \int_0^1 (1 - \lambda) F''(\lambda) \mathrm{d}\lambda \right| \leq \frac{\operatorname{osc}(f)\operatorname{osc}(g)^2}{8}.
\end{aligned}
\end{equation}

For \eqnref{eq:exp-reweighting-bounds-2}, we further define $H(\lambda) = \log \mathbb{E}_p[e^{\lambda g}]$. Since $H(0) = 0$, $H(1) = \log \mathbb{E}_p[e^g]$, $H'(0) = \mathbb{E}_p[g]$, and $0 \leq H''(\lambda) = \mathbb{V}_{p_\lambda}[g] \leq \frac{\operatorname{osc}(g)^2}{4}$, we obtain $\mathbb{D}_{\mathrm{KL}}(p \,\|\, q) = \log \mathbb{E}_p[e^g] - \mathbb{E}_p[g] = H(1) - H(0) - H'(0) = \int_0^1 (1 - \lambda) H''(\lambda) \mathrm{d}\lambda \leq \frac{\operatorname{osc}(g)^2}{8}$. This completes the proof.
\end{proof}

\subsection{Main Proof of \cref{theorem:kl-mopd-joint-objective}} \label{app:proof-theorem:kl-mopd-joint-objective:main-proof}

\noindentparnolinenospace{Proof Overview.}
To prove \cref{theorem:kl-mopd-joint-objective}, we first establish two lemmas. Specifically, \cref{lemma:kl-variance-relation} connects the initial student-teacher token-wise KL divergence $d_{j, t}$ to the objective increment variance $v_{j, t}$. Next, \cref{lemma:joint-objective-expansion} expands the joint objective difference between $\pi_{cv, \mathrm{KL}}$ and $\pi_{cv, \mathrm{U}}$ into the leading term $\frac{1}{\beta} \sum_{t=1}^{T} \left( \alpha_{c, t} - \frac{1}{2} \right) (v_{c, t} - v_{v, t})$ up to a bounded remainder. Finally, substituting \cref{lemma:kl-variance-relation} and our KL-proportional weights $\alpha_{j, t}$ from \eqnref{eq:weight} into \cref{lemma:joint-objective-expansion} directly yields the desired lower bound in \cref{theorem:kl-mopd-joint-objective}.

\begin{tcolorbox}[kessokustyle=nijikayellow]
\begin{restatable}{lemma}{RestateLemmaKLVariance} \label{lemma:kl-variance-relation}
    \upshape
    {(\textit{Approximation of Initial Student-Teacher KL by Objective Increment Variance}).}
    For each objective $j \in \{c, v\}$ and decoding step $t = 1, 2, \cdots, T$, the initial student-teacher token-wise KL divergence $d_{j, t}$ in \eqnref{eq:initial-kl} can be approximated by $d_{j, t} \approx \frac{v_{j, t}}{2\beta^2}$, with the approximation error bounded by:
    \begin{equation} \label{eq:kl-variance-bound}
        \left| d_{j, t} - \frac{v_{j, t}}{2\beta^2} \right| \leq \frac{L_j^3}{12\beta^3}.
    \end{equation}
\end{restatable}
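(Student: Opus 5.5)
The plan is to compute the initial student--teacher KL at each prefix exactly as a log-moment-generating-function gap, and then expand it to second order around the variance of the objective increment. The first step is to make the token-wise teacher explicit. From \eqnref{eq:optimal-teacher-policy}, marginalizing the suffix gives
\[
\pi_j(y_t \mid x, y_{<t}) = \pi_{\theta_0}(y_t \mid x, y_{<t}) \, \frac{M_j(x, y_{\leq t})}{M_j(x, y_{<t})},
\]
where $M_j(x, y_{\leq s}) \coloneqq \mathbb{E}_{y' \sim \pi_{\theta_0}(\cdot | x)}[e^{u_j(x, y')/\beta} \mid y'_{\leq s} = y_{\leq s}]$. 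Also, $M_j(x, y_{<t}) = \mathbb{E}_{y_t \sim \pi_{\theta_0}}[M_j(x, y_{\leq t})]$ by the tower property. So at each prefix the teacher is exactly the exponential reweighting of Lemma~\ref{lemma:exponential-reweighting}, with $p = \pi_{\theta_0}(\cdot | x, y_{<t})$ and $g(y_t) = G(y_{\leq t}) \coloneqq \log M_j(x, y_{\leq t})$. Since $0 \leq u_j \leq L_j$, we have $g \in [0, L_j/\beta]$ and hence $\operatorname{osc}(g) \leq L_j/\beta$. The token-wise KL is then $H(1) - H(0) - H'(0)$ with $H(\lambda) = \log \mathbb{E}_p[e^{\lambda g}]$, as in the proof of Lemma~\ref{lemma:exponential-reweighting}.

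Instead of stopping at the second-order bound, I will push the Taylor expansion one order further:
\[
\mathbb{D}_{\mathrm{KL}} = \tfrac{1}{2} \mathbb{V}_p[g] + \int_0^1 \tfrac{(1-\lambda)^2}{2} H'''(\lambda) \, \mathrm{d}\lambda .
\]
Here $H'''(\lambda)$ is the third central moment of $g$ under $p_\lambda$. For a variable with range $r$, the third central moment is bounded by $r^3/(6\sqrt{3})$ (alternatively by the cruder $r^3/4$). This makes the cubic remainder at most about $0.016 \, L_j^3/\beta^3$.

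The second step is to replace $\mathbb{V}_p[g]$ by $\mathbb{V}_p[\Delta U_j]/\beta^2$. Write $G = U_j/\beta + c$, where $c(y_{\leq t}) = \log \mathbb{E}[e^{u_j/\beta} \mid y_{\leq t}] - \mathbb{E}[u_j/\beta \mid y_{\leq t}]$ is a conditional cumulant gap. The bound \eqnref{eq:exp-reweighting-bounds-2} applied conditionally (i.e., Hoeffding's lemma) gives $0 \leq c \leq L_j^2/(8\beta^2)$. Trivially we also have $c \leq L_j/\beta$, so $\operatorname{osc}(c)^2 \leq L_j^3/(8\beta^3)$. Since $U_j(x, y_{<t})$ is constant in $y_t$, we have $\mathbb{V}_p[U_j/\beta] = \mathbb{V}_p[\Delta U_j]/\beta^2$. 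The error is then
\[
\tfrac{1}{2}\bigl|2\operatorname{Cov}_p(U_j/\beta, c) + \mathbb{V}_p[c]\bigr| \leq \tfrac{1}{4}\operatorname{osc}(U_j/\beta)\operatorname{osc}(c) + \tfrac{1}{8}\operatorname{osc}(c)^2,
\]
using \eqnref{eq:osc-bounds}. Plugging in $\operatorname{osc}(U_j/\beta) \leq L_j/\beta$ and the two bounds on $c$ yields at most $L_j^3/(32\beta^3) + L_j^3/(64\beta^3)$.

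Finally, I take expectations over $x \sim \mathcal{X}$ and $y_{<t} \sim \pi_{\theta_0}(\cdot | x)$. By \eqnref{eq:increment-variance}, the main term becomes exactly $v_{j,t}/(2\beta^2)$, and the pointwise error bounds are preserved. Summing the constants gives roughly $(0.031 + 0.016 + 0.016)\, L_j^3/\beta^3 < L_j^3/(12\beta^3)$, which is \eqnref{eq:kl-variance-bound}.

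The main obstacle is the constant bookkeeping. The naive $\mathbb{V}[c]$ term scales as $\beta^{-4}$, not $\beta^{-3}$, and it is only the interpolation $\operatorname{osc}(c)^2 \leq (L_j/\beta)\cdot L_j^2/(8\beta^2)$ that makes a pure $L_j^3/\beta^3$ bound possible without assuming $\beta \geq L_j$. Also, if the sharp third-moment constant is not used, the budget of $1/12$ may be exceeded. In that case I would first try bounding $H'''$ via $\operatorname{osc}(g) \cdot \mathbb{V}_{p_\lambda}[g] \leq \operatorname{osc}(g)^3/4$, and check whether the total still fits.
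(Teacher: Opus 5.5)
Your argument is correct, but it takes a different route from the paper's.

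\textbf{The paper's route.} The paper never compares the soft value $\log M_j$ with the hard value $U_j/\beta$. It writes the token-wise KL as $\phi_j(1; x, y_{<t}) - \mathbb{E}_{y_t}[\phi_j(1; x, y_{\leq t})]$, where $\phi_j(\lambda; x, \cdot)$ is the conditional log-MGF of $u_j/\beta$ (so $\phi_j(1;\cdot) = \log M_j$ in your notation). It then Taylor-expands $\phi_j$ in $\lambda$ at \emph{both} prefixes. The first-order terms cancel by the tower property. The second-order terms $\frac{1}{2\beta^2}\bigl(\mathbb{V}[u_j \mid y_{<t}] - \mathbb{E}_{y_t}\mathbb{V}[u_j \mid y_{\leq t}]\bigr)$ collapse exactly to $\frac{1}{2\beta^2}\mathbb{V}_{y_t}[\Delta U_j]$ by the law of total variance. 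The only error is a difference of two Lagrange remainders, each at most $L_j^3/(24\beta^3)$ by the crude bound $|\phi_j'''| \leq \operatorname{osc}\cdot\mathbb{V} \leq L_j^3/(4\beta^3)$. So $1/12$ comes out with no sharp inequalities.

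\textbf{Your route.} You expand only at the token level, treating the teacher's conditional as the tilt of $p$ by $g = \log M_j$, as in \cref{lemma:exponential-reweighting}. You then pay separately for replacing $\mathbb{V}_p[g]$ by $\mathbb{V}_p[U_j/\beta]$ through the cumulant gap $c$. That gap is the remainder $R$ the paper itself uses later in \cref{lemma:joint-objective-expansion}. Your interpolation $\operatorname{osc}(c)^2 \leq L_j^3/(8\beta^3)$ correctly keeps that term at order $\beta^{-3}$. This gives one template (tilt by the soft value, then compare with $U_j/\beta$) shared by both lemmas. However, it is intrinsically looser: with the same third-moment inequality the paper's route always gives a smaller constant. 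In your accounting the mismatch terms already use up $3/64$, leaving only $1/12 - 3/64 = 7/192$ for the cubic remainder.

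\textbf{The load-bearing step.} Consequently the sharp bound $|\mu_3| \leq r^3/(6\sqrt{3})$ carries your argument and should be proved, not quoted. For a fixed mean the extremal distributions are two-point, supported on $\mu - u$ and $\mu + s$, for which $\mu_3 = us(s-u)$. Maximizing this over $s, u \geq 0$ with $s + u \leq r$ gives the constant. Your stated fallback, $\operatorname{osc}(g)\,\mathbb{V}_{p_\lambda}[g] \leq \operatorname{osc}(g)^3/4$, is the same crude $r^3/4$ bound. It yields a total of $3/64 + 1/24 \approx 0.0885 > 1/12$, so it does not close the argument. An elementary alternative that does suffice is $|\mu_3| \leq \max|X-\mu| \cdot (b-\mu)(\mu-a) \leq 4r^3/27$, using the Bhatia--Davis variance bound. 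This gives a cubic remainder of $2/81$ and a total of about $0.072 < 1/12$.
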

\end{tcolorbox}

\begin{proof}[Proof of \cref{lemma:kl-variance-relation}]
For any request $x$, prefix $y_{<t}$, and $\lambda \in [0, 1]$, define the conditional log-moment generating function as:
\begin{equation} \label{eq:log-mgf}
    \phi_j(\lambda; x, y_{<t}) \coloneqq \log \mathbb{E}_{y' \sim \pi_{\theta_0}(\cdot | x)} \left[ \exp\left( \frac{\lambda u_j(x, y')}{\beta} \right) \;\middle|\; y'_{<t} = y_{<t} \right].
\end{equation}
Since $\pi_j(y | x) = \frac{1}{Z_j(x)} \pi_{\theta_0}(y | x) \exp(u_j(x, y) / \beta)$, taking expectations yields $\pi_j(y_{<t} | x) = \frac{\pi_{\theta_0}(y_{<t} | x)}{Z_j(x)} \exp(\phi_j(1; x, y_{<t}))$, and taking the ratio between $y_{\leq t}$ and $y_{<t}$ gives the token-wise log-probability ratio:
\begin{equation} \label{eq:teacher-token-ratio}
    \log \frac{\pi_j(y_t | x, y_{<t})}{\pi_{\theta_0}(y_t | x, y_{<t})} = \phi_j(1; x, y_{\leq t}) - \phi_j(1; x, y_{<t}).
\end{equation}
Therefore, the token-wise KL divergence can be rewritten as:
\begin{equation} \label{eq:local-kl-mgf}
\begin{aligned}
    &\mathbb{D}_{\mathrm{KL}} \left( \pi_{\theta_0}(\cdot | x, y_{<t}) \,\|\, \pi_j(\cdot | x, y_{<t}) \right) \\
    &\qquad = \phi_j(1; x, y_{<t}) - \mathbb{E}_{y_t \sim \pi_{\theta_0}(\cdot | x, y_{<t})} \left[ \phi_j(1; x, y_{\leq t}) \right].
\end{aligned}
\end{equation}
To estimate the token-wise KL divergence, we expand $\phi_j(\lambda; x, y_{<t})$ at $\lambda = 0$. Specifically, $\phi_j(0; x, y_{<t}) = 0$, $\phi_j'(0; x, y_{<t}) = \frac{1}{\beta} U_j(x, y_{<t})$, and $\phi_j''(0; x, y_{<t}) = \frac{1}{\beta^2} \mathbb{V}_{y' \sim \pi_{\theta_0}(\cdot | x)} \left[ u_j(x, y') \mid y'_{<t} = y_{<t} \right]$. Moreover, since $\phi_j'''(\lambda; x, y_{<t})$ is the third central moment of $u_j / \beta \in [0, L_j / \beta]$ under the tilted conditional distribution, it is bounded by $|\phi_j'''(\lambda; x, y_{<t})| \leq \frac{L_j}{\beta} \cdot \frac{L_j^2}{4\beta^2} = \frac{L_j^3}{4\beta^3}$ via \eqnref{eq:osc-bounds}. Then, applying Taylor's theorem at $\lambda = 1$ yields:
\begin{equation} \label{eq:mgf-taylor}
\begin{aligned}
    & \phi_j(1; x, y_{<t}) = \frac{1}{\beta} U_j(x, y_{<t}) \\
    & + \frac{1}{2\beta^2} \mathbb{V}_{y' \sim \pi_{\theta_0}(\cdot | x)} \left[ u_j(x, y') \mid y'_{<t} = y_{<t} \right] + R_j(x, y_{<t}),
\end{aligned}
\end{equation}
where the remainder $|R_j(x, y_{<t})| \leq \frac{1}{6} \sup_{\lambda \in [0, 1]} |\phi_j'''(\lambda; x, y_{<t})| \leq \frac{L_j^3}{24\beta^3}$. Substituting \eqnref{eq:mgf-taylor} into \eqnref{eq:local-kl-mgf}, the first-order terms cancel by $\mathbb{E}_{y_t \sim \pi_{\theta_0}(\cdot | x, y_{<t})}[U_j(x, y_{\leq t})] = U_j(x, y_{<t})$, then applying the law of total variance to the second-order terms further gives:
\begin{equation} \label{eq:kl-variance-relation}
\begin{aligned}
    & \mathbb{D}_{\mathrm{KL}} \left( \pi_{\theta_0}(\cdot | x, y_{<t}) \,\|\, \pi_j(\cdot | x, y_{<t}) \right) \\
    & = \frac{1}{2\beta^2} \mathbb{V}_{y_t \sim \pi_{\theta_0}(\cdot | x, y_{<t})} \left[ U_j(x, y_{\leq t}) \right] + \Delta R_j(x, y_{<t}) \\
    & = \frac{1}{2\beta^2} \mathbb{V}_{y_t \sim \pi_{\theta_0}(\cdot | x, y_{<t})} \left[ \Delta U_j(x, y_{\leq t}) \right] + \Delta R_j(x, y_{<t}),
\end{aligned}
\end{equation}
where $\Delta R_j(x, y_{<t}) \coloneqq R_j(x, y_{<t}) - \mathbb{E}_{y_t \sim \pi_{\theta_0}(\cdot | x, y_{<t})} \left[ R_j(x, y_{\leq t}) \right]$ satisfies $|\Delta R_j(x, y_{<t})| \leq \frac{L_j^3}{12\beta^3}$, and the last equality holds because $U_j(x, y_{<t})$ is constant \wrt $y_t$. Taking expectations of both sides in \eqnref{eq:kl-variance-relation} over $x \sim \mathcal{X}, \, y \sim \pi_{\theta_0}(\cdot | x)$, we obtain the desired \eqnref{eq:kl-variance-bound}.
\end{proof}

\begin{tcolorbox}[kessokustyle=nijikayellow]
\begin{restatable}{lemma}{RestateLemmaJointExpansion} \label{lemma:joint-objective-expansion}
    \upshape
    {(\textit{Approximation of Joint Objective Difference}).}
    Denote the joint objective difference between the policies $\pi_{cv}$ and $\pi_{cv, \mathrm{U}}$ as $\Delta_{\pi_{cv}, \pi_{cv, \mathrm{U}}} [u_c + u_v] \coloneqq \mathbb{E}_{\mathcal{X}, \pi_{cv}}[u_c + u_v] - \mathbb{E}_{\mathcal{X}, \pi_{cv, \mathrm{U}}}[u_c + u_v]$. Then, we have the approximation $\Delta_{\pi_{cv}, \pi_{cv, \mathrm{U}}} [u_c + u_v] \approx \frac{1}{\beta} \sum_{t=1}^{T} \left( \alpha_{c, t} - \frac{1}{2} \right) (v_{c, t} - v_{v, t})$ for any normalized weights $\alpha_{j, t}$, with the error bounded by:
    \begin{equation} \label{eq:joint-expansion-bound}
    \begin{aligned}
        & \left| \Delta_{\pi_{cv}, \pi_{cv, \mathrm{U}}} [u_c + u_v] - \frac{1}{\beta} \sum_{t=1}^{T} \left( \alpha_{c, t} - \frac{1}{2} \right) (v_{c, t} - v_{v, t}) \right| \\
        & \qquad \leq \frac{L^3}{\beta^2} \left( \frac{5T}{8} + \frac{1}{2} \sum_{t=1}^{T} \sqrt{t-1} \right).
    \end{aligned}
    \end{equation}
\end{restatable}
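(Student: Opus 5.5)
The plan is to compare both policies with the reference $\pi_{\theta_0}$ by telescoping along the decoding steps, and then take the difference. For any weights, subtract the prefix-only terms from \eqnref{eq:teacher-token-ratio} (they are absorbed into $Z_{cv}$). Then \eqnref{eq:conditional-target} becomes an exponential tilt of the reference conditional:
\[
\pi_{cv}(y_t | x, y_{<t}) \propto \pi_{\theta_0}(y_t | x, y_{<t}) \exp\left( g_t(x, y_{\leq t}) \right), \quad g_t \coloneqq \sum_{j \in \{c, v\}} \alpha_{j, t} \, \phi_j(1; x, y_{\leq t}).
\]
Since $u_j / \beta \in [0, L_j / \beta]$, each $\phi_j(1; \cdot)$ lies in $[0, L/\beta]$, so $\operatorname{osc}(g_t) \leq L/\beta$ as a function of $y_t$. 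Moreover, \eqnref{eq:mgf-taylor} gives
\[
g_t = \frac{1}{\beta}\sum_j \alpha_{j, t} U_j(x, y_{\leq t}) + r_t,
\]
where $r_t$ collects the conditional-variance and $R_j$ terms. We have $\operatorname{osc}(r_t) = O(L^2/\beta^2)$, and the $y_t$-dependent part of $U_j$ is exactly $\Delta U_j$.

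Next, telescope pathwise: $u_j(x, y) - U_j(x, \varnothing) = \sum_{t=1}^{T} \Delta U_j(x, y_{\leq t})$. Since $U_j(x, \varnothing)$ does not depend on the policy, this gives
\[
\mathbb{E}_{\mathcal{X}, \pi_{cv}}[u_j] - \mathbb{E}_{\mathcal{X}, \pi_{\theta_0}}[u_j] = \sum_{t=1}^{T} \mathbb{E}_{y_{<t} \sim \pi_{cv}} \, \mathbb{E}_{y_t \sim \pi_{cv}(\cdot | x, y_{<t})} \left[ \Delta U_j(x, y_{\leq t}) \right].
\]
The inner expectation is handled by \cref{lemma:exponential-reweighting}, \eqnref{eq:exp-reweighting-bounds-1}, with $p = \pi_{\theta_0}(\cdot | x, y_{<t})$, $f = \Delta U_j$ (which has $p$-mean zero and $\operatorname{osc}(f) \leq L$) and $g = g_t$. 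It equals $\operatorname{Cov}_p(\Delta U_j, g_t)$ up to an error of $L^3/(8\beta^2)$. Replacing $g_t$ by its linear part costs $|\operatorname{Cov}_p(\Delta U_j, r_t)| \leq L \operatorname{osc}(r_t)/4 = O(L^3/\beta^2)$. We are left with
\[
\frac{1}{\beta} \sum_k \alpha_{k, t} \operatorname{Cov}_p(\Delta U_j, \Delta U_k).
\]
Summing over $j \in \{c, v\}$ gives
\[
\frac{1}{\beta}\left[ \alpha_{c, t} \mathbb{V}_p[\Delta U_c] + \alpha_{v, t} \mathbb{V}_p[\Delta U_v] + \operatorname{Cov}_p(\Delta U_c, \Delta U_v) \right],
\]
because $\alpha_{c, t} + \alpha_{v, t} = 1$ makes the cross term weight-independent. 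Subtracting the same expression at $\alpha = 1/2$ kills the cross term and leaves $\frac{1}{\beta}(\alpha_{c, t} - \frac12)(\mathbb{V}_p[\Delta U_c] - \mathbb{V}_p[\Delta U_v])$. These per-step errors, counted for both policies and both objectives, supply the $\frac{5T}{8}$ part of the constant.

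The remaining and hardest step is that the outer prefix expectation is under $\pi_{cv}$ (respectively $\pi_{cv, \mathrm{U}}$), while $v_{j, t}$ in \eqnref{eq:increment-variance} is defined with prefixes drawn from $\pi_{\theta_0}$. I would control this change of measure as follows.
\begin{itemize}
\item By the chain rule and \eqnref{eq:exp-reweighting-bounds-2} applied at each step, the prefix divergence satisfies $\mathbb{D}_{\mathrm{KL}}(\pi_{\theta_0}(y_{<t} | x) \,\|\, \pi_{cv}(y_{<t} | x)) \leq (t-1) L^2/(8\beta^2)$.
\item Pinsker's inequality then gives a total-variation bound of order $\frac{L}{4\beta}\sqrt{t-1}$, which is the source of the $\sqrt{t-1}$ terms.
\item The integrand $\frac{1}{\beta}(\alpha_{c, t} - \frac12)(\mathbb{V}[\Delta U_c] - \mathbb{V}[\Delta U_v])$ has range at most $\frac{1}{\beta} \cdot \frac{1}{2} \cdot \frac{L^2}{4}$, using $|\alpha_{c, t} - \frac12| \leq \frac12$ and variances at most $L^2/4$.
\end{itemize}
Only the weighted-variance difference needs this switch, since the cross terms already cancel before it. The switch therefore costs $O(L^3\sqrt{t-1}/\beta^2)$ per step, and tuning the constants should give the $\frac12\sum_t \sqrt{t-1}$ coefficient. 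The delicate part is precisely this bookkeeping. One option is to first rewrite the cross term as a difference of two $\pi_{cv}$-prefix expectations that cancel exactly, so that only the $\alpha$-dependent term is transported. The other is to use a coupling-based TV bound rather than Pinsker if the constant comes out too large.
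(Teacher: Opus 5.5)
Your architecture is the paper's: telescoping against $\pi_{\theta_0}$, the tilt $g$ with $\operatorname{osc}(g)\le L/\beta$, \cref{lemma:exponential-reweighting} for the inner expectation, and the chain rule plus Pinsker for the prefix shift. Splitting $f$ by objective instead of using $f=\Delta U_c+\Delta U_v$ gives the same $5T/8$. The gap is in the prefix-shift step, specifically your claim that the cross terms ``already cancel before'' the change of measure. Write $C_\alpha(x,y_{<t}) \coloneqq \frac{1}{\beta}\bigl[\alpha_{c,t}\mathbb{V}_p[\Delta U_c]+\alpha_{v,t}\mathbb{V}_p[\Delta U_v]+\operatorname{Cov}_p(\Delta U_c,\Delta U_v)\bigr]$ for your local main term. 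At step $t$ you must compare $\mathbb{E}_{x,\,y_{<t}\sim\pi_{cv}}[C_\alpha]$ with $\mathbb{E}_{x,\,y_{<t}\sim\pi_{cv,\mathrm{U}}}[C_{1/2}]$. These are integrated against \emph{different} prefix laws, so subtracting $C_\alpha-C_{1/2}$ pointwise is not legitimate. Exactly, the difference of the two main terms is $\mathbb{E}_{\pi_{cv}}[C_\alpha-C_{1/2}]+\bigl(\mathbb{E}_{\pi_{cv}}-\mathbb{E}_{\pi_{cv,\mathrm{U}}}\bigr)[C_{1/2}]$. Your plan transports only the first piece and silently drops the second. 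That second piece is not zero in general (for $t\ge 2$) and must itself be controlled by a change of measure. Your suggested remedy, rewriting the cross term as two $\pi_{cv}$-prefix expectations that cancel exactly, fails for the same reason: one of the two expectations genuinely lives under $\pi_{cv,\mathrm{U}}$, and moving it to $\pi_{cv}$ costs a TV term.

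The paper transports first and cancels second. For each policy separately it moves the whole $C(x,y_{<t})$, cross term included, to $\pi_{\theta_0}$. It uses $|C|\le L^2/(2\beta)$ (so $\operatorname{osc}(C)\le L^2/\beta$) and $\operatorname{TV}\le L\sqrt{t-1}/(4\beta)$, at a cost of $L^3\sqrt{t-1}/(4\beta^2)$ per policy. Only under the common measure $\pi_{\theta_0}$ does $\mathbb{E}[\Delta U_c\Delta U_v]$ cancel, and the two transports produce exactly the $\frac12\sum_t\sqrt{t-1}$ term.

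If you prefer to keep your split, add the missing bound. Note that $C_{1/2}=\frac{1}{2\beta}\mathbb{V}_p[\Delta U_c+\Delta U_v]\in[0,L^2/(2\beta)]$, and $\operatorname{TV}(\pi_{cv},\pi_{cv,\mathrm{U}})\le L\sqrt{t-1}/(2\beta)$ by the triangle inequality through $\pi_{\theta_0}$. This gives $L^3\sqrt{t-1}/(4\beta^2)$ for the second piece. The first piece, $C_\alpha-C_{1/2}$, costs $L^3\sqrt{t-1}/(16\beta^2)$, since its oscillation is at most $L^2/(4\beta)$, twice the sup bound you quote. Together these stay within the stated constant.

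Separately, you need $\operatorname{osc}(r_t)\le L^2/(8\beta^2)$ to get exactly $5T/8$. To obtain it, write $r_t=\sum_j\alpha_{j,t}\int_0^1(1-\lambda)\phi_j''(\lambda;x,y_{\le t})\,\mathrm{d}\lambda$ with $0\le\phi_j''\le L_j^2/(4\beta^2)$. Bounding the conditional-variance term and $R_j$ separately leaves a stray $L^4/\beta^3$ contribution that the stated bound does not accommodate.
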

\end{tcolorbox}

\begin{proof}[Proof of \cref{lemma:joint-objective-expansion}]
\noindentparnolinenospace{Rewriting the Joint Objective Difference.}
We first express $u_j$ using the objective increments $\Delta U_j$ in \eqnref{eq:objective-increment}. Given $\sum_{t=1}^{T} \Delta U_j(x, y_{\leq t}) = U_j(x, y_{\leq T}) - U_j(x, \varnothing) = u_j(x, y) - \mathbb{E}_{y' \sim \pi_{\theta_0}(\cdot | x)}[u_j(x, y')]$, taking expectations over $\mathcal{X}, \pi_{cv}$ yields:
\begin{equation} \label{eq:telescoping-joint-diff}
\begin{aligned}
    & \mathbb{E}_{\mathcal{X}, \pi_{cv}}[u_c + u_v] - \mathbb{E}_{\mathcal{X}, \pi_{\theta_0}}[u_c + u_v] \\
    & = \sum_{t=1}^{T} \mathbb{E}_{\substack{x \sim \mathcal{X} \\ y_{<t} \sim \pi_{cv}(\cdot | x)}} \left[ \mathbb{E}_{y_t \sim \pi_{cv}(\cdot | x, y_{<t})} \left[ \Delta U_c(x, y_{\leq t}) + \Delta U_v(x, y_{\leq t}) \right] \right].
\end{aligned}
\end{equation}

\noindentparnoline{Rewriting the Optimal Policy with Weighted MOPD.}
Next, we can express $\pi_{cv}(y_t | x, y_{<t})$ in an exponential reweighting form of $\pi_{\theta_0}(y_t | x, y_{<t})$ (\cf \cref{lemma:exponential-reweighting}). Recall in \eqnref{eq:teacher-token-ratio} that $\pi_j(y_t | x, y_{<t}) = \pi_{\theta_0}(y_t | x, y_{<t}) \exp(\phi_j(1; x, y_{\leq t}) - \phi_j(1; x, y_{<t}))$, substituting the Taylor expansion $\phi_j(1; x, y_{\leq t}) = \frac{1}{\beta} U_j(x, y_{\leq t}) + \int_0^1 (1 - \lambda) \phi_j''(\lambda; x, y_{\leq t}) \mathrm{d}\lambda$ into \eqnref{eq:conditional-target} and absorbing $y_t$-independent terms into the partition function, we can express the optimal policy $\pi_{cv}(y_t | x, y_{<t})$:
\begin{equation} \label{eq:target-exp-form}
    \pi_{cv}(y_t | x, y_{<t}) \propto \pi_{\theta_0}(y_t | x, y_{<t}) \exp\left( g(x, y_{\leq t}) \right),
\end{equation}
where $g(x, y_{\leq t}) \coloneqq \frac{1}{\beta} \left( \alpha_{c, t} \Delta U_c(x, y_{\leq t}) + \alpha_{v, t} \Delta U_v(x, y_{\leq t}) \right) + R(x, y_{\leq t})$, with remainder $R(x, y_{\leq t}) = \sum_{j \in \{c, v\}} \alpha_{j, t} \int_0^1 (1 - \lambda) \phi_j''(\lambda; x, y_{\leq t}) \mathrm{d}\lambda$. Since $0 \leq \phi_j''(\lambda; x, y_{\leq t}) \leq \frac{L_j^2}{4\beta^2}$, we have $0 \leq R(x, y_{\leq t}) \leq \frac{L^2}{8\beta^2}$. Furthermore, note that the log-moment generating function $\phi_j \in [0, L_j / \beta]$, the original $\phi$-expression of $g$ reveals that $\operatorname{osc}(g) \leq \frac{L}{\beta}$.

\noindentparnoline{Approximating the Inner Expectation by Covariance.}
Then, we are moving on approximating the inner expectation in \eqnref{eq:telescoping-joint-diff}, \ie $\mathbb{E}_{y_t \sim \pi_{cv}(\cdot | x, y_{<t})} \left[ \Delta U_c(x, y_{\leq t}) + \Delta U_v(x, y_{\leq t}) \right]$, by the covariance term. Specifically, we can fix $(x, y_{<t})$ and consider $y_t$ as the random variable. Then, we apply \cref{lemma:exponential-reweighting} with $p = \pi_{\theta_0}(\cdot | x, y_{<t})$, $q = \pi_{cv}(\cdot | x, y_{<t})$, $f = \Delta U_c(x, y_{\leq t}) + \Delta U_v(x, y_{\leq t})$, and $g = g(x, y_{\leq t})$. Note that $\mathbb{E}_p[f] = 0$, and since $U_j(x, y_{<t})$ is constant \wrt $y_t$, $\operatorname{osc}(\Delta U_j) = \operatorname{osc}(U_j(x, y_{\leq t})) \leq L_j$, which implies $\operatorname{osc}(f) \leq L_c + L_v \leq 2L$. Thus, \cref{lemma:exponential-reweighting} gives $\left| \mathbb{E}_q[f] - \operatorname{Cov}_p(f, g) \right| \leq \frac{2L \cdot (L/\beta)^2}{8} = \frac{L^3}{4\beta^2}$. We further decompose $\operatorname{Cov}_p(f, g) = C(x, y_{<t}) + \operatorname{Cov}_p(f, R)$, where the main covariance term $C(x, y_{<t})$ is defined as:
\begin{equation} \label{eq:local-covariance-def}
    C(x, y_{<t}) \coloneqq \frac{1}{\beta} \operatorname{Cov}_{y_t \sim \pi_{\theta_0}(\cdot | x, y_{<t})} \left( \Delta U_c + \Delta U_v, \, \alpha_{c, t} \Delta U_c + \alpha_{v, t} \Delta U_v \right),
\end{equation}
and recall that $R \in [0, L^2 / (8\beta^2)]$, \eqnref{eq:osc-bounds} bounds the remainder covariance as $|\operatorname{Cov}_p(f, R)| \leq \frac{2L \cdot (L^2 / (8\beta^2))}{4} = \frac{L^3}{16\beta^2}$, therefore:
\begin{equation} \label{eq:local-covariance-approx}
    \left| \mathbb{E}_{y_t \sim \pi_{cv}(\cdot | x, y_{<t})} \left[ \Delta U_c(x, y_{\leq t}) + \Delta U_v(x, y_{\leq t}) \right] - C(x, y_{<t}) \right| \leq \frac{5L^3}{16\beta^2}.
\end{equation}

\noindentparnoline{Controlling Prefix Distribution Shift.}
Note that the inner expectation in \eqnref{eq:telescoping-joint-diff} is taken under $y_{<t} \sim \pi_{cv}(\cdot | x)$, while the covariance term $C(x, y_{<t})$ is defined under $y_{<t} \sim \pi_{\theta_0}(\cdot | x)$. Therefore, we need to estimate the error incurred by this prefix distribution shift. Specifically, by \cref{lemma:exponential-reweighting}, the token-wise KL satisfies $\mathbb{D}_{\mathrm{KL}} \left( \pi_{\theta_0}(\cdot | x, y_{<\tau}) \,\|\, \pi_{cv}(\cdot | x, y_{<\tau}) \right) \leq \frac{\operatorname{osc}(g)^2}{8} \leq \frac{L^2}{8\beta^2}$, and the chain rule yields $\mathbb{D}_{\mathrm{KL}} \left( \pi_{\theta_0}(y_{<t} | x) \,\|\, \pi_{cv}(y_{<t} | x) \right) \leq \frac{(t-1)L^2}{8\beta^2}$. By Pinsker's inequality~\citep{tsybakov2009introduction}, we can bound the total variation (TV) distance as $\operatorname{TV} \left( \pi_{\theta_0}(y_{<t} | x), \, \pi_{cv}(y_{<t} | x) \right) \leq \sqrt{\frac{1}{2} \mathbb{D}_{\mathrm{KL}} \left( \pi_{\theta_0}(y_{<t} | x) \,\|\, \pi_{cv}(y_{<t} | x) \right)} \leq \frac{L\sqrt{t-1}}{4\beta}$.
Meanwhile, by \eqnref{eq:osc-bounds}, we have $|C(x, y_{<t})| \leq \frac{1}{\beta} \cdot \frac{2L \cdot L}{4} = \frac{L^2}{2\beta}$, so $\operatorname{osc}(C(x, y_{<t})) \leq \frac{L^2}{\beta}$. Combining these bounds, changing the prefix distribution from $\pi_{cv}$ to $\pi_{\theta_0}$ incurs an error of at most:
\begin{equation} \label{eq:prefix-shift-error}
    \left| \mathbb{E}_{\mathcal{X}, \pi_{cv}}[C(x, y_{<t})] - \mathbb{E}_{\mathcal{X}, \pi_{\theta_0}}[C(x, y_{<t})] \right| \leq \frac{L^2}{\beta} \cdot \frac{L\sqrt{t-1}}{4\beta} = \frac{L^3 \sqrt{t-1}}{4\beta^2}.
\end{equation}

\noindentparnoline{Expanding the Covariance Term.}
Finally, we can approximate \eqnref{eq:telescoping-joint-diff} by the expected covariance term $C(x, y_{<t})$ over $\mathcal{X}, \pi_{\theta_0}$. Recalling \eqnref{eq:increment-variance}, we can further expand this covariance term as:
\begin{equation} \label{eq:expanded-covariance}
\begin{aligned}
    & \mathbb{E}_{\mathcal{X}, \pi_{\theta_0}}[C(x, y_{<t})] \\
    & = \frac{1}{\beta} \left( \alpha_{c, t} v_{c, t} + \alpha_{v, t} v_{v, t} + \mathbb{E}_{\mathcal{X}, \pi_{\theta_0}} \left[ \Delta U_c(x, y_{\leq t}) \Delta U_v(x, y_{\leq t}) \right] \right).
\end{aligned}
\end{equation}
Combining \eqnref{eq:telescoping-joint-diff}, \eqnref{eq:local-covariance-approx}, \eqnref{eq:prefix-shift-error}, and \eqnref{eq:expanded-covariance}, we apply the resulting approximation to both $\pi_{cv}$ and $\pi_{cv, \mathrm{U}}$ (where $\alpha_{c, t} = \alpha_{v, t} = 1/2$), and subtracting the two expansions \eqnref{eq:expanded-covariance}. Note that the weight-independent cross-covariance term $\mathbb{E}_{\mathcal{X}, \pi_{\theta_0}}[\Delta U_c \Delta U_v]$ cancels out, as well as $\alpha_{v, t} - 1/2 = -(\alpha_{c, t} - 1/2)$, the remaining leading approximation term is precisely $\frac{1}{\beta} \sum_{t=1}^{T} \left( \alpha_{c, t} - \frac{1}{2} \right) (v_{c, t} - v_{v, t})$, while the accumulated error is bounded by $2 \sum_{t=1}^{T} \frac{L^3}{\beta^2} \left( \frac{5}{16} + \frac{\sqrt{t-1}}{4} \right)$, which completes the proof of \eqnref{eq:joint-expansion-bound}.
\end{proof}


\begin{tcolorbox}[kessokustyle=bocchipink]
\RestateTheoremKLMOPDJointObjective*
\end{tcolorbox}

\begin{proof}[Proof of \cref{theorem:kl-mopd-joint-objective}]
By \cref{lemma:kl-variance-relation}, we can write $v_{j, t} = 2\beta^2 d_{j, t} + r_{j, t}$ for each $j$, where $|r_{j, t}| \leq \frac{L_j^3}{6\beta} \leq \frac{L^3}{6\beta}$. Under \mymethod's weights $\alpha_{c, t} = \frac{d_{c, t}}{d_{c, t} + d_{v, t}}$ in \eqnref{eq:weight}, we have $\alpha_{c, t} - \frac{1}{2} = \frac{d_{c, t} - d_{v, t}}{2(d_{c, t} + d_{v, t})}$. Substituting $v_{j, t}$ into the leading term of \eqnref{eq:joint-expansion-bound} at step $t$ yields:
\begin{equation} \label{eq:leading-term-substitution}
\begin{aligned}
    \frac{1}{\beta} \left( \alpha_{c, t} - \frac{1}{2} \right) (v_{c, t} - v_{v, t})
    & = \frac{d_{c, t} - d_{v, t}}{2\beta(d_{c, t} + d_{v, t})} \left( 2\beta^2 (d_{c, t} - d_{v, t}) + r_{c, t} - r_{v, t} \right) \\
    & = \beta \frac{(d_{c, t} - d_{v, t})^2}{d_{c, t} + d_{v, t}} + \frac{d_{c, t} - d_{v, t}}{2\beta(d_{c, t} + d_{v, t})} (r_{c, t} - r_{v, t}).
\end{aligned}
\end{equation}
Since $\left| \frac{d_{c, t} - d_{v, t}}{d_{c, t} + d_{v, t}} \right| \leq 1$ and $|r_{c, t} - r_{v, t}| \leq \frac{L^3}{3\beta}$, the second term in \eqnref{eq:leading-term-substitution} is bounded in absolute value by $\frac{L^3}{6\beta^2}$. Summing \eqnref{eq:leading-term-substitution} over $t = 1, 2, \cdots, T$ and combining the error $\frac{T L^3}{6\beta^2}$ with the bound in \cref{lemma:joint-objective-expansion}, the first leading term in \eqnref{eq:leading-term-substitution} precisely becomes $\beta S_{\mathrm{KL}}$ defined in \eqnref{eq:square-gain}, which is non-negative, and the bound becomes:
\begin{equation}
    \Delta_{\pi_{cv, \mathrm{KL}}, \pi_{cv, \mathrm{U}}} [u_c + u_v]
    \geq \beta S_{\mathrm{KL}} - \frac{L^3}{\beta^2} \left( \frac{19T}{24} + \frac{1}{2} \sum_{t=1}^{T} \sqrt{t-1} \right)
\end{equation}
which precisely matches the desired lower bound in \eqnref{eq:main-bound}. This completes the proof of \cref{theorem:kl-mopd-joint-objective}.
\end{proof}







\end{document}

\endinput